\newif\ifPDF
\newtheorem{theorem}{Theorem}[section]
\newtheorem{lemma}[theorem]{Lemma}
\newtheorem{definition}[theorem]{Definition}
\newtheorem{remark}[theorem]{Remark} 
\newtheorem{corollary}[theorem]{Corollary}
\newcommand{\supp}{\operatorname{supp}}
\newcommand{\eps}{\varepsilon}
\newcommand{\bbR}{\mathbb R} \newcommand{\bbS}{\mathbb S}
\newcommand{\bbZ}{\mathbb Z} 
 \newcommand{\bn}{\mathbf n}
\newcommand{\cA}{\mathcal A} \newcommand{\cB}{\mathcal B}
\newcommand{\cE}{\mathcal E} 
 \newcommand{\cJ}{\mathcal J}
\newcommand{\cK}{\mathcal K} \newcommand{\cL}{\mathcal L}
\newcommand{\cO}{\mathcal O}  
\newcommand{\cS}{\mathcal S} 
 \newcommand{\cX}{\mathcal X} 
\newcommand{\cY}{\mathcal Y} 
\newcommand{\aver}[1]{\langle {#1} \rangle}
\newenvironment{keywords}
{\noindent{\bf Key words.}\small}{\par\vspace{1ex}}
\newcommand{\chapterauthor}[1]{%
	{\parindent0pt\vspace*{-25pt}%
		\linespread{1.1}\large\scshape#1%
		\par\nobreak\vspace*{35pt}}
	\@afterheading%
}
\newsavebox\myboxA
\newsavebox\myboxB
\newlength\mylenA
\newcommand*\xoverline[2][0.75]{%
	\sbox{\myboxA}{$\m@th#2$}%
	\setbox\myboxB\null
	\ht\myboxB=\ht\myboxA%
	\dp\myboxB=\dp\myboxA%
	\wd\myboxB=#1\wd\myboxA
	\sbox\myboxB{$\m@th\overline{\copy\myboxB}$}
	\setlength\mylenA{\the\wd\myboxA}
	\addtolength\mylenA{-\the\wd\myboxB}%
	\ifdim\wd\myboxB<\wd\myboxA%
	\rlap{\hskip 0.5\mylenA\usebox\myboxB}{\usebox\myboxA}%
	\else
	\hskip -0.5\mylenA\rlap{\usebox\myboxA}{\hskip 0.5\mylenA\usebox\myboxB}%
	\fi}
\title{Instability of an inverse problem for the stationary radiative transport near the diffusion limit}
\author{Hongkai Zhao, Yimin Zhong}
\date{}
\begin{document}
	\maketitle
	\begin{abstract}
		In this work, we study the instability of an inverse problem of radiative transport equation with angularly independent source and angularly averaged measurement near the diffusion limit, i.e.\ the normalized mean free path (the Knudsen number) $0 < \eps \ll 1$. {For the reconstruction of absorption coefficient,  we show that instability depends on the relative sizes between $\eps$ and the perturbation in measurements.} When $\eps$ is sufficiently small, we obtain exponential instability, which stands for the diffusion regime, and otherwise we obtain H\"{o}lder instability instead, which stands for the transport regime.
	\end{abstract}
	\begin{keywords}
		instability, radiative transport equation, inverse problem, diffusion approximation, Kolmogorov entropy
	\end{keywords}
\section{Introduction}
In this paper, we study the instability of an inverse problem for stationary radiative transport equation (RTE) near the diffusion limit. The radiative transport equation is the typical model to describe the propagation of radiative particles through a scattering medium. In the stationary setting, we assume the density of particles $u(x, v)$ satisfies following general RTE
\begin{equation}\label{eq:RTE0}
\begin{aligned}
v\cdot \nabla u(x, v) + \sigma_t(x, v)u(x, v) &= \int_{\Omega} k(x, v,v') u(x, v') d\mu(v'),\quad &\text{ in }& D\times \Omega \\
u(x, v) &=f(x, v), \quad &\text{ on }&\Gamma_{-}
\end{aligned}
\end{equation}
The spatial domain $D\subset \bbR^d, d\ge 2$ is bounded with convex smooth boundary, $\Omega = \bbS^{d-1}$ denotes the unit sphere surface in $\bbR^d$, $d\mu$ is the associated uniform probability measure of $\Omega$. $f(x, v)$ models an incident density of particles entering the domain. The incoming and outgoing boundary sets $\Gamma_{-}$ and $\Gamma_{+}$ are defined by $\Gamma_{\pm} = \{(x, v)\in \partial D\times \Omega: \pm n_x \cdot  v > 0\}$ respectively,  where $n_x$ is the outward unit normal vector at $x\in\partial D$.  The optical parameters $\sigma_{t}(x, v)$ and $k(x, v,v')$ are the total absorption and scattering coefficients respectively. For most inverse transport problems, these optical parameters are unknown and needed to be reconstructed from certain boundary or interior measurements ~\cite{gao2010multilevel,bal2009inverse,ren2015inverse,stefanov2008inverse,bal2016ultrasound,choulli1999inverse}. Such inverse problems have a wide range of applications in medical imaging, remote sensing, nuclear engineering, astrophysics, etc., we refer the interested readers to e.g.~\cite{arridge2009optical,ren2007transport,leblond2009early,tsang1985theory,tucker1986satellite,mccormick1992inverse,glasstone2012nuclear}.

In practice, it is common to assume the optical parameters to be independent of the angular variable, which means $\sigma_{t}(x, v) = \sigma_t(x)$ and $k(x, v, v') = \sigma_s(x)p(v,v')$ for some  \emph{a-priori} known phase function $p$. {The most used measurement is the so-called \emph{albedo} operator defined by
\begin{equation}
\cA : u|_{\Gamma_{-}} \mapsto u|_{\Gamma_{+}},
\end{equation}
where the source function $u|_{\Gamma_{-}}$ and the measurement of $u|_{\Gamma_+}$  are both assumed to be \emph{angularly resolved}. The relevant theories for $\cA$ have been extensively studied in~\cite{bal2008stability, bal2010stability,lai2018inverse} using the singular decomposition of Schwartz kernels, the coefficients $\sigma_t(x)$ and $\sigma_s(x)$ can be \emph{both} reconstructed with H\"{o}lder type stability for $d\ge 3$. In~\cite{lai2018inverse}, the authors have shown the stability estimate actually transits from the H\"{o}lder type to logarithmic type when the Knudsen number is approaching the diffusion limit. }

In many applications, however, a full knowledge of the albedo operator, which requires sufficient sampling and accurate measuring of angular dependent data,  is either too expensive or impossible. Furthermore, the angular resolved data may suffer from very low particle counts in certain directions, which leads to low signal to noise ratio in the measurements. Therefore, the incident sources and the measurements are \emph{both} angularly independent often in practice. {A typical angularly averaged albedo operator is the following
\begin{equation}
\cB: u|_{\Gamma_{-}}(x) \mapsto u|_{\Gamma_{+}}(x)=\int_{n_x \cdot v > 0} v\cdot n_x u|_{\Gamma_{+}}(x, v) d\mu(v)
\end{equation}
with $u|_{\Gamma_{-}}\in L^p(\partial D)$, $p\ge 1$. Without the angular dependences, the singular decomposition technique is not applicable anymore.
It has been widely accepted that such inverse problem is quite ill-posed. Some synthetic numerical examples~\cite{bi2015image,tang2017mixed} are performed to verify the ill-posedness. However, the best known uniqueness and stability estimate results are only limited to the linearized case with small optical parameters~\cite{bal2009inverse,bal2008inverse2}, where the multiple-scattering component is dominated by the single-scattering component. The complete theory for uniqueness and stability estimate are still unavailable. }

One interesting setting is to use the \emph{angularly resolved} sources and the \emph{angularly averaged} measurement on $\Gamma_{+}$, this can be viewed as some kind of mixed mapping between $\cA$ and $\cB$. Such problem has been studied in~\cite{langmore2008stationary,chen2018stability}. Since the incident sources still contain the angular dependences, one can observe that the singular decomposition of Schwartz kernel still works for the ballistic part, which allows using X-ray transform to recover $\sigma_t$ uniquely. {Nevertheless, the reconstruction of scattering coefficient is more difficult due to the failure to distinguish between single-scattering and multiple-scattering. Under the linearized setting, the authors in~\cite{chen2018stability} have analyzed the instability of the reconstructions for both $\sigma_t$ and $\sigma_s$ when the Knudsen number $\eps$ is approaching the diffusion limit.}

In this paper, we consider the inverse transport problem for the angularly averaged albedo operator $\cB$ and study the instability of reconstructing the absorption coefficient $\sigma_a(x) := \sigma_t(x) - \sigma_s(x)$ near the diffusion regime. To characterize the closeness to diffusion approximation, we rescale the RTE by the Knudsen number $0 < \eps \ll 1$ as follows 
\begin{equation}\label{eq:RTE}
\begin{aligned}
v\cdot \nabla u(x, v) + \left(\eps \sigma_a (x) + \frac{1}{\eps} \sigma_s(x) \right)u(x, v) &= \frac{1}{\eps}\sigma_s(x) \aver{u},\quad &\text{ in }& D\times \Omega \\
u(x, v) &=f(x), \quad &\text{ on }&\Gamma_{-}
\end{aligned}
\end{equation}
In this study, we assume the scattering coefficient $\sigma_s(x)p(v,v')$ is known for $x\in D$. For simplicity, we let the phase function $p(v,v')\equiv 1$ and use the symbol $\aver{u}$ to represent the following angularly averaged integral, which is independent of $v$,
\begin{equation}
\aver{u} = \int_{\Omega} p(v,v')u(x,v')d\mu(v') =\int_{\Omega} u(x, v) d\mu(v).
\end{equation}
We also define the \emph{scaled} measurement by the averaged albedo operator in the following
\begin{equation}
\begin{aligned}
&\Lambda_{\sigma_a} : f(x) \in \cX \mapsto \Lambda_{\sigma_a} f(x) = \cJ_{+}(x) \in \cY\\
&\cJ_{+}(x)= \frac{1}{\eps}\int_{n_x \cdot v > 0} v\cdot n_x u(x, v) d\mu(v),
\end{aligned}
\end{equation}
where $u(x, v)$ solves the RTE~\eqref{eq:RTE} and the Banach spaces $\cX$, $\cY$ will be clarified later. Since the incoming boundary condition $u(x, v) = f(x)$ on $\Gamma_{-}$ is already provided, therefore we replace the outward boundary current $\cJ_{+}(x)$ by the following total boundary current,
\begin{equation}
\cJ (x)= \frac{1}{\eps}\int_{\Omega} v\cdot n_x u(x, v) d\mu(v), \quad x\in \partial D.
\end{equation}
and redefine $\Lambda_{\sigma_a}$ by
\begin{equation}
\begin{aligned}
&\Lambda_{\sigma_a} : f(x) \in \cX \mapsto \Lambda_{\sigma_a} f(x) = \cJ(x) \in \cY
\end{aligned}
\end{equation}

For the above angularly averaged albedo operator $\Lambda_{\sigma_a}$, there are two limiting values of $\eps$. When $\eps\to\infty$, we arrive at the purely linear transport equation, where we can drop the collision term by $\sigma_s\equiv 0$. Then the measurement will be exactly the line Radon transforms of the absorption coefficient $\sigma_a$. In this setting, the inverse problem has the H\"{o}lder type stability since line Radon transform only loses one half of derivative. Another special case is the diffusion limit of~\eqref{eq:RTE} with $\eps\rightarrow 0$, using the Hilbert asymptotic expansion, the RTE is then approximated by the following diffusion equation~\cite{dautray1993mathematical, papanicolaou1975asymptotic,bensoussan1979boundary,papanicolaou1995diffusion}
\begin{equation}
-\frac{1}{d}\nabla \left( \frac{1}{\sigma_s(x)} U(x)\right) + \sigma_a(x) U(x) = 0
\end{equation}
and the angularly averaged albedo operator $\Lambda_{\sigma_a}$ turns to be the Dirichlet-to-Neumann (DtN) map.
The reconstruction of the potential of Schr\"{o}dinger equation from DtN map is closely related to the electrical impedance tomography (EIT) or Calder\'{o}n's problem. {The EIT problem was studied extensively in recent decades \cite{uhlmann2009electrical,sylvester1987global,nachman1996global, alessandrini1991determining}. It is well-known that the reconstruction of isotropic conductivity from the DtN map is severely ill-posed, the reconstruction has \emph{both} sharp logarithmic stability and exponential instability~\cite{uhlmann2009electrical,sylvester1987global,mandache2001exponential,di2003examples}.}

Regarding the above two cases, the stability estimate of $\sigma_a$ transits from the H\"{o}lder type ($\eps \to \infty$) to the logarithmic type ($\eps\to 0$). { Such transition was studied recently for the angularly resolved albedo operator $\cA$ in~\cite{lai2018inverse} based on singular decomposition of the Schwartz kernel. While for the angularly averaged albedo operator $\Lambda_{\sigma_a}$, the transition is still not well understood.} In general, there are no uniqueness or stability estimate results for this problem. Under the linearized setting, the author in~\cite{bal2008inverse} have proved a stability estimate of the scattering coefficient with smallness assumption.

In the following context, we fix $\cX = H^s(\partial D)$ and $\cY = H^{-s}(\partial D)$ with parameter $s > \frac{d+4}{2}$ for the angularly averaged albedo operator $\Lambda_{\sigma_a}$. We use a constructive approach to show that the instability estimate of $\Lambda_{\sigma_a}: H^s(\partial D)\to H^{-s}(\partial D)$ varies from the H\"{o}lder type to the exponential type as the Knudsen number $\eps$ decreases to $0$. {The framework of our study is motivated by~\cite{mandache2001exponential} which studied the instability for potential reconstruction for Schr\"{o}dinger equation. }

The rest of this paper is organized as follows. In the Section~\ref{sec:MAIN}, we make appropriate assumptions on the coefficients and state our main results.  In Section~\ref{sec:PRELIM}, we introduce some preliminary results and provide key proofs. In Section~\ref{sec:ESTIMATE}, we provide an estimate for the matrix representation for $\Lambda_{\sigma_a}$. In the Section~\ref{sec:INSTAB}, we prove the main results by using Kolmogorov's entropy theory. We give conclusions in Section~\ref{sec:CONCL} and provide the proofs of two Lemmas used in the proof of our main result in Appendix.

\section{Main results}\label{sec:MAIN}
For our constructive approach, we fix the domain $D = B(0, 1)\subset \bbR^{d}$, where $B(z, r)$ denotes a ball centered at $z$ with radius $r$. Assume the scattering coefficient $\sigma_s(x)\equiv \sigma_s$ is a positive constant over $D$. We define the admissible set of the absorption coefficient by
 $$\cS := \{\sigma_a \,|\, \sigma_a(x) \in L^{\infty}(D),\, \sigma_a \ge 0,\, \supp \sigma_a \subset K, \sigma_a \in C^{q}(K)\}$$ where the interior region $K = B(0, r_0)$ with $0 < r_0 < 1$ and $q > 0$. We also introduce the $d$-dimensional \emph{complex} spherical harmonic basis $\mathbb{H}^d:=\{ Y_{mj} \,|\, m\ge 0, 1\le j \le p_m \}$ on the unit sphere $\bbS^{d-1}$,  where $Y_{mj}$ being a spherical harmonic of order $m$ and $p_m = \binom{m+d-1}{d-1} - \binom{m+d-3}{d-1} \le 2 (1+m)^{d-2}$. $\mathbb{H}^d$ forms a Schauder basis for both $\cX=H^{s}(\partial D)$ and $\cY= H^{-s}(\partial D)$.

\begin{theorem}\label{thm:MAIN}
	For any $q > 0$ and dimension $d\ge 2$, any  $s > \frac{d+4}{2}$ and $R > 0$, there is a constant $\beta > 0$ such that for any $\theta\in (0, \frac{R}{2})$ and $\sigma_{a, 0}\in L^{\infty}(D)$ with $\|\sigma_{a, 0}\|_{\infty}\le \frac{R}{2}$, $\supp \sigma_{a, 0} \subset K = B(0, r_0)$, there are absorption coefficients $\sigma_{a, 1}, \sigma_{a, 2}\in C^q(K)$, also supported in $K$ such that 
	\begin{equation}
	\begin{aligned}
	&\|\Lambda_{1} - \Lambda_2\|_{H^s(\partial D)\to H^{-s}(\partial D)} \le 8\sqrt{2} \omega(\theta^{-\frac{d}{(2d+1)q}}),\\
	&\|\sigma_{a, 1} - \sigma_{a, 2}\|_{\infty} \ge \theta, \\
	&\|\sigma_{a, i} - \sigma_{a, 0}\|_{C^q} \le \beta,\quad i = 1,2 \\
	&\|\sigma_{a, i} - \sigma_{a, 0}\|_{\infty} \le \theta,\quad i=1,2
	\end{aligned}
	\end{equation}
	where $\Lambda_1, \Lambda_2$ are the averaged albedo operators for $\sigma_{a,1},\sigma_{a,2}$ respectively,  $\omega(t)$ is the solution to the following equation 
	\begin{equation}
	t = \log(\omega^{-1}) + \frac{\eps}{\omega} + \left(\frac{\eps}{\omega}\right)^{-1/\tau},\quad \tau = \frac{d+4}{2} -s < 0.
	\end{equation} 
\end{theorem}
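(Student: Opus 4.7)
The plan is to adapt the Kolmogorov-entropy / pigeonhole scheme pioneered by Mandache in the Calder\'on setting. The goal is to exhibit, inside a fixed $C^{q}$-ball around $\sigma_{a,0}$, a family of perturbed absorption coefficients $\{\sigma_{a,0}+\delta\sigma_j\}_{j=1}^{N}$ that are pairwise $\theta$-separated in $L^{\infty}$, while simultaneously showing that the associated averaged albedo operators $\{\Lambda_j\}_{j=1}^{N}$ admit an $\eta$-cover in the $H^{s}\!\to\!H^{-s}$ operator norm of cardinality strictly less than $N$, with $\eta=8\sqrt{2}\,\omega(\theta^{-d/((2d+1)q)})$. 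Once this mismatch is established, pigeonhole produces two indices $i\ne j$ with $\|\Lambda_i-\Lambda_j\|\le 2\eta$ and $\|\sigma_{a,i}-\sigma_{a,j}\|_\infty\ge\theta$, yielding exactly the pair required by the theorem.

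For the packing, I would work with a single smooth compactly supported bump $\phi$, distribute rescaled translates $\phi_k(x)=\phi((x-x_k)/h)$ on a regular lattice inside $K$ with $N\asymp h^{-d}$ nodes, and form $\delta\sigma_\zeta=\theta\sum_k\zeta_k\phi_k$ for sign vectors $\zeta\in\{-1,+1\}^{N}$. Then $\|\delta\sigma_\zeta\|_\infty=\theta$ and $\|\delta\sigma_\zeta\|_{C^q}\lesssim\theta h^{-q}$, so the choice $h\asymp\theta^{1/q}$ keeps every candidate inside a fixed $C^{q}$-ball of radius $\beta$. A standard Gilbert--Varshamov thinning then extracts roughly $\exp(c\,\theta^{-d/q})$ sign patterns whose associated perturbations are pairwise $\theta$-separated in $L^{\infty}$.

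The covering side is where the $\eps$-dependence enters and is the technical heart of the proof. Expanding $\Lambda_{\sigma_a}-\Lambda_{\sigma_{a,0}}$ in the spherical harmonic basis $\mathbb{H}^d$ of $\partial D$ and using the matrix-entry estimates promised in Section~\ref{sec:ESTIMATE}, I would control the $(m,j;m',j')$ entry by an envelope combining exponential decay in $m$ inherited from the limiting diffusion DtN map with a polynomial decay $m^{-|\tau|}$ coming from the residual transport contribution through the Sobolev index $\tau=(d+4)/2-s<0$. Retaining only those matrix entries exceeding a threshold $\omega$ and discretizing each of them on a uniform grid of spacing $\omega$ yields an $\omega$-cover of the operator family of cardinality $(\beta/\omega)^{M(\omega)}$, where $M(\omega)$ is the effective number of retained entries. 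A careful interpolation between the diffusion-dominated regime and the transport-dominated regime is expected to give $M(\omega)\asymp\log(\omega^{-1})+\eps/\omega+(\eps/\omega)^{-1/\tau}$, the three summands corresponding respectively to the logarithmic tail of the exponential decay, the crossover layer, and the polynomial transport tail.

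Balancing $\exp(c\,\theta^{-d/q})>(\beta/\omega)^{M(\omega)}$ and absorbing the $\log(\beta/\omega)$ factor into a polynomial rescaling of the independent variable yields the implicit equation $t=M(\omega)$ with $t=\theta^{-d/((2d+1)q)}$, which is exactly how $\omega(\cdot)$ is defined in the statement; the pigeonhole step then delivers the required pair $\sigma_{a,1},\sigma_{a,2}$. I expect the main obstacle to be the uniform-in-$m$ matrix-entry estimate itself: one has to track the Hilbert boundary-layer expansion carefully, proving both that the low-frequency block genuinely inherits the exponential smoothing of the limiting DtN map at a rate governed by $\eps$ and that the residual transport block produces exactly the Sobolev exponent $|\tau|$ at high modes, so that the three contributions combine into the single implicit formula governing $\omega$. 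The remaining ingredients---the sign-pattern packing, the quantization of matrix entries, and the final pigeonhole argument---follow the now-standard template for instability proofs in elliptic inverse problems.
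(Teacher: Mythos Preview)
Your overall architecture---Kolmogorov-type packing versus a $\delta$-net of the operator range, then pigeonhole---matches the paper's, and your reading of the two-term matrix-entry envelope is qualitatively right. The genuine gap is in the covering-number bookkeeping, which is precisely what produces the exponent $\tfrac{d}{(2d+1)q}$.

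You assert that the \emph{number of retained matrix entries} satisfies $M(\omega)\asymp\log(\omega^{-1})+\eps/\omega+(\eps/\omega)^{-1/\tau}$ and then try to recover the factor $2d+1$ by ``absorbing the $\log(\beta/\omega)$ factor into a polynomial rescaling.'' Neither step is correct. The three-term quantity is (up to constants) the \emph{truncation level} $l_{\delta s}$ in the spherical-harmonic degree---the smallest $l$ for which the weighted envelope $(1+l)^{\tau}(r_0^{l}+\eps)$ drops below the threshold; see~\eqref{eq:TRUNC_ESTIMATE}. The number of retained $4$-tuples $(m,j,n,k)$ with $\max(m,n)\le l_{\delta s}$ is then $\asymp l_{\delta s}^{\,2d-2}$, and the per-entry discretization of the $M_{2}$ block contributes an additional factor involving $\eps(1+l_{\delta s})^{2}/\delta$. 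Once these are combined (Lemma~\ref{lem:NET}), the log of the covering number is bounded by $\eta\bigl(\log\delta^{-1}+\eps/\delta+(\eps/\delta)^{-1/\tau}\bigr)^{2d+1}$. Balancing this $(2d+1)$-st power against the packing exponent $c\,\theta^{-d/q}$ is what forces $t=\theta^{-d/((2d+1)q)}$ in the implicit definition of $\omega$; a single logarithmic factor cannot be traded for the power change $d/q\to d/((2d+1)q)$.

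Two smaller points. The exponential piece $r_0^{l}$ in the entry bound does not come from the limiting diffusion DtN map but from the support condition $\supp\sigma_a\subset B(0,r_0)$ combined with the harmonic extension $|x|^{n}Y_{nk}(x/|x|)$ restricted to $K$; the $\eps$-scaled piece $M_{2}$ arises from the corrector in the diffusion expansion of the averaged current, cf.~\eqref{eq:COR}--\eqref{eq:EST2}. And since the admissible class requires $\sigma_a\ge 0$, your $\pm 1$ sign-pattern family must be adjusted to nonnegative perturbations; Lemma~\ref{lem:KOLMO} already packages this.
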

Let $s = \frac{d+4}{2} + 1$, then $\tau = -1$, depending on the relation of $\eps$ and $\theta$, we introduce two cases as follows.
\begin{corollary}\label{cor:EXP}
	When Knudsen number $\eps$ is small enough such that $$\eps \le \frac{1}{3}\theta^{-\frac{d}{(2d+1)q}}\exp\left(-\frac{1}{3}\theta^{-\frac{d}{(2d+1)q}}\right),$$ then under the same assumptions of Theorem~\ref{thm:MAIN},
	\begin{equation}
	\begin{aligned}
	&\|\Lambda_{1} - \Lambda_2\|_{H^s(\partial D)\to H^{-s}(\partial D)} \le 8\sqrt{2} \exp\left(-\frac{1}{3}\theta^{-\frac{d}{(2d+1)q}}\right),\\
	&\|\sigma_{a, 1} - \sigma_{a, 2}\|_{\infty} \ge \theta, \\
	&\|\sigma_{a, i} - \sigma_{a, 0}\|_{C^q} \le \beta,\quad i = 1,2 \\
	&\|\sigma_{a, i} - \sigma_{a, 0}\|_{\infty} \le \theta,\quad i=1,2
	\end{aligned}
	\end{equation}
\end{corollary}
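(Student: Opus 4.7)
The plan is to deduce Corollary~\ref{cor:EXP} directly from Theorem~\ref{thm:MAIN} by analyzing the scalar equation that implicitly defines $\omega(t)$. With the choice $s = \frac{d+4}{2}+1$ we have $\tau = -1$, so $(\eps/\omega)^{-1/\tau} = \eps/\omega$ and the defining equation collapses to
\begin{equation*}
t = \log(\omega^{-1}) + \frac{2\eps}{\omega} =: F_\eps(\omega).
\end{equation*}
First I would observe that $F_\eps$ is strictly decreasing on $(0,\infty)$, since its derivative is $-\omega^{-1} - 2\eps\omega^{-2} < 0$. Consequently $\omega(t)$ is unambiguously defined as the inverse, and for any explicit $\omega_0 > 0$ with $F_\eps(\omega_0) \le t$ monotonicity forces $\omega(t) \le \omega_0$. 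This one-sided comparison is the only quantitative tool I will need.

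Next, setting $t_0 := \theta^{-d/((2d+1)q)}$, I would test the candidate $\omega_0 := \exp(-t_0/3)$. A direct substitution gives
\begin{equation*}
F_\eps(\omega_0) = \frac{t_0}{3} + 2\eps\, e^{t_0/3},
\end{equation*}
and the inequality $F_\eps(\omega_0) \le t_0$ reduces to $\eps \le \frac{t_0}{3}\, e^{-t_0/3}$, which is exactly the hypothesis of the corollary. Therefore $\omega(t_0) \le e^{-t_0/3}$, and substituting this into the bound $\|\Lambda_1-\Lambda_2\|_{H^s\to H^{-s}} \le 8\sqrt{2}\,\omega(t_0)$ from Theorem~\ref{thm:MAIN} produces the advertised exponential decay $8\sqrt{2}\exp\!\left(-\frac{1}{3}\theta^{-d/((2d+1)q)}\right)$. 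The remaining three estimates (the lower bound on $\|\sigma_{a,1}-\sigma_{a,2}\|_\infty$, the $C^q$ bound, and the $L^\infty$ proximity to $\sigma_{a,0}$) are structural properties of the perturbations already produced by Theorem~\ref{thm:MAIN} and transfer verbatim, since the hypotheses on $\sigma_{a,0}$ and the roles of $\beta$ and $\theta$ are identical.

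There is no substantive obstacle beyond this algebra: the corollary is a regime-specific reading of Theorem~\ref{thm:MAIN}, and all of the real analytic content — the Kolmogorov entropy argument and the matrix representation bound for $\Lambda_{\sigma_a}$ — lives in that theorem. The only subtlety worth flagging is that the monotonicity argument is strictly one-sided: the smallness hypothesis on $\eps$ is invoked only to obtain an upper bound on $\omega(t_0)$, which is all an instability statement ever requires, so no delicate matching lower bound or asymptotic expansion of $F_\eps$ is needed. One could, if desired, sharpen the prefactor $\tfrac{1}{3}$ by optimizing the split between the $\log(\omega^{-1})$ and $2\eps/\omega$ terms, but the stated version already captures the diffusive regime $\eps \ll e^{-c t_0}$ that the corollary is designed to exhibit.
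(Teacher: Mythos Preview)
Your proof is correct and follows essentially the same approach as the paper: both set $\tau=-1$ to reduce the defining relation to $t=\log\omega^{-1}+2\eps/\omega$ and then extract the bound $\omega\le e^{-t/3}$ from the smallness hypothesis on $\eps$. Your monotonicity-and-test-value argument is a slightly cleaner packaging of what the paper does via a case split on whether $\eps/\delta\le\log\delta^{-1}$, but the content is the same.
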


\begin{corollary}\label{cor:HOLDER}
	When Knudsen number $\eps$ satisfies
	$$1\gg\eps > \frac{1}{3}\theta^{-\frac{d}{(2d+1)q}}\exp\left(-\frac{1}{3}\theta^{-\frac{d}{(2d+1)q}}\right),$$
	then under the same assumptions of Theorem~\ref{thm:MAIN},
	\begin{equation}
	\begin{aligned}
	&\|\Lambda_{1} - \Lambda_2\|_{H^s(\partial D)\to H^{-s}(\partial D)} \le 24 \sqrt{2}\eps\theta^{\frac{d}{(2d+1)q}},\\
	&\|\sigma_{a, 1} - \sigma_{a, 2}\|_{\infty} \ge \theta, \\
	&\|\sigma_{a, i} - \sigma_{a, 0}\|_{C^q} \le \beta,\quad i = 1,2 \\
	&\|\sigma_{a, i} - \sigma_{a, 0}\|_{\infty} \le \theta,\quad i=1,2
	\end{aligned}
	\end{equation}
\end{corollary}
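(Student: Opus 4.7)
The plan is to derive Corollary~\ref{cor:HOLDER} as a direct consequence of Theorem~\ref{thm:MAIN} by analysing the implicit function $\omega$ in the specific regime singled out by the hypothesis on $\eps$. The key observation is that the choice $s=\frac{d+4}{2}+1$ forces $\tau=-1$, hence $-1/\tau=1$, and the two tail terms $\eps/\omega$ and $(\eps/\omega)^{-1/\tau}$ coincide. Writing $T:=\theta^{-d/((2d+1)q)}$, the defining relation of Theorem~\ref{thm:MAIN} collapses to the elementary form
\[
T \;=\; \log(\omega^{-1}) + \frac{2\eps}{\omega}.
\]
To prove the corollary it then suffices to show $\omega(T)\le 3\eps/T$; plugging this into the bound $\|\Lambda_1-\Lambda_2\|\le 8\sqrt{2}\,\omega(T)$ supplied by the main theorem yields $24\sqrt{2}\,\eps\,\theta^{d/((2d+1)q)}$, exactly the stated estimate.

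The comparison $\omega(T)\le 3\eps/T$ is obtained by a one-line monotonicity argument. First I would note that $F(\omega):=\log(\omega^{-1})+2\eps/\omega$ has derivative $-\omega^{-1}(1+2\eps/\omega)<0$ on $(0,\infty)$, so $F$ is strictly decreasing and $F(\omega)=T$ has a unique positive root, namely $\omega(T)$. Then I would evaluate $F$ at the candidate $\omega_\star:=3\eps/T$ and verify
\[
F(\omega_\star) \;=\; \log\!\Big(\frac{T}{3\eps}\Big) + \frac{2T}{3} \;\le\; T
\quad\Longleftrightarrow\quad
\eps \;\ge\; \frac{T}{3}\,e^{-T/3},
\]
which is precisely the lower bound on $\eps$ assumed in Corollary~\ref{cor:HOLDER}. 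Strict monotonicity of $F$ then forces $\omega(T)\le \omega_\star$, finishing the estimate of the albedo operator.

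The three remaining inequalities $\|\sigma_{a,1}-\sigma_{a,2}\|_\infty\ge \theta$, $\|\sigma_{a,i}-\sigma_{a,0}\|_{C^q}\le \beta$ and $\|\sigma_{a,i}-\sigma_{a,0}\|_\infty\le \theta$ require no additional work: the competing pair of coefficients $\sigma_{a,1},\sigma_{a,2}$ is the one already produced in Theorem~\ref{thm:MAIN}, and the hypothesis on $\eps$ enters only through the sharpening of the operator-norm bound via $\omega$.

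I do not foresee a genuine obstacle. The delicate points are purely bookkeeping: first, correctly exploiting $\tau=-1$ so that $(\eps/\omega)^{-1/\tau}=\eps/\omega$ and the implicit equation becomes explicitly solvable in the monotonicity sense; second, checking that $\omega_\star=3\eps/T$ lies in $(0,1)$, so that $\log(\omega_\star^{-1})>0$ and the expression for $F(\omega_\star)$ is meaningful. The side condition $\eps\ll 1$ in the hypothesis, together with the fact that $T$ becomes large as $\theta\to 0$, guarantees this automatically in the relevant regime.
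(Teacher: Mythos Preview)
Your proposal is correct and follows essentially the same route as the paper: reduce to $\tau=-1$ so the implicit relation becomes $T=\log\omega^{-1}+2\eps/\omega$, and then show $\omega\le 3\eps/T$ under the stated lower bound on $\eps$. The paper phrases the last step as a two-term comparison (if $\eps/\omega\ge\log\omega^{-1}$ then $T\le 3\eps/\omega$), whereas you use strict monotonicity of $F(\omega)=\log\omega^{-1}+2\eps/\omega$ and evaluate at the test point $\omega_\star=3\eps/T$; these are the same argument in slightly different clothing, and your version makes the link to the threshold $\eps=(T/3)e^{-T/3}$ a bit more transparent.
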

\begin{remark}
The above two Corollaries indicate the transition from the H\"{o}lder type instability in transport regime to the exponential type instability in diffusion limit as $\eps$ becomes small enough.
\end{remark}
\section{Preliminaries}\label{sec:PRELIM}
In this section, we present the following preliminary results for the scaled RTE~\eqref{eq:RTE}. 
\begin{lemma}\label{lem:CURRENT}
	If the boundary source $f \in L^p(\partial D)$ for  $p\ge 1$, then $u(x, v)\in L^p(D\times\Omega)$ and $\aver{u}\in L^p(D)$.
\end{lemma}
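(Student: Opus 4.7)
My plan is to reformulate \eqref{eq:RTE} as a fixed-point equation for the scalar density $\aver{u}$ on $L^p(D)$ and invert the resulting integral operator by a Neumann series. Writing $\sigma := \eps\sigma_a + \sigma_s/\eps$, the method of characteristics gives the Duhamel decomposition $u = u_b + u_s$ with
\begin{align*}
u_b(x,v) &= f(x_-(x,v))\, e^{-\int_0^{\tau_-(x,v)} \sigma(x-sv)\, ds},\\
u_s(x,v) &= \int_0^{\tau_-(x,v)} \frac{\sigma_s}{\eps}\, \aver{u}(x-tv)\, e^{-\int_0^t \sigma(x-sv)\, ds}\, dt,
\end{align*}
where $\tau_-(x,v)$ is the backward exit time and $x_-(x,v) := x - \tau_-(x,v) v \in \partial D$.

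First I would bound the ballistic part directly. The standard change of variables $x \mapsto (y,t)$ with $y = x_-(x,v) \in \partial D$ and $t \in [0, \tau_+(y,v)]$ has Jacobian $|v \cdot n_y| \le 1$; since $D$ has finite diameter $L$, discarding the nonpositive exponent gives
\begin{equation*}
\|u_b\|_{L^p(D\times\Omega)}^p \le L \int_\Omega \int_{\partial D} |f(y)|^p\, |v \cdot n_y|\, d\sigma(y)\, d\mu(v) \le C_d L\, \|f\|_{L^p(\partial D)}^p,
\end{equation*}
and Jensen's inequality then yields $\aver{u_b} \in L^p(D)$.

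The heart of the argument is the integral equation for $\aver{u}$. Averaging the Duhamel identity in $v$ and substituting $y = x - tv$ (which converts $t^{d-1}\, dt\, d\sigma(v)$ into $dy$) produces $\aver{u} = \aver{u_b} + T\aver{u}$ with symmetric kernel
\begin{equation*}
K(x,y) = \frac{\sigma_s}{\eps\, |\bbS^{d-1}|}\cdot\frac{\exp\bigl(-\int_0^{|x-y|}\sigma(x + s(y-x)/|x-y|)\, ds\bigr)}{|x-y|^{d-1}}.
\end{equation*}
Using $\sigma \ge \sigma_s/\eps$ (with $\sigma_s$ the constant scattering coefficient from Section~\ref{sec:MAIN}) and polar coordinates centered at $x$, the boundedness of $D$ yields $\sup_{x \in D}\int_D K(x,y)\, dy \le 1 - e^{-\sigma_s L/\eps} =: \rho < 1$, and by symmetry of $K$ the same bound holds for $\sup_y \int K(x,y)\, dx$. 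Schur's test then delivers $\|T\|_{L^p(D) \to L^p(D)} \le \rho < 1$ for every $p \in [1,\infty]$, so $(I-T)^{-1}$ exists as a bounded operator on $L^p(D)$ and $\aver{u} = (I-T)^{-1}\aver{u_b} \in L^p(D)$.

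Finally, substituting $\aver{u} \in L^p(D)$ back into the formula for $u_s$ and applying Minkowski's integral inequality in $t$ (with $\aver{u}$ extended by zero outside $D$) gives $\|u_s(\cdot,v)\|_{L^p(D)} \le \|\aver{u}\|_{L^p(D)}$ uniformly in $v$, hence $u_s \in L^p(D\times \Omega)$ and $u = u_b + u_s \in L^p(D\times\Omega)$. The main technical point will be verifying the strict contraction $\rho < 1$ of the collision operator; this crucially relies on $D$ being bounded (through a uniform backward exit-time bound), and one should note that $\rho \to 1$ as $\eps \to 0$, reflecting the singular behavior of the diffusion limit in which the resulting $L^p$ bound degenerates.
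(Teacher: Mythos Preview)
Your argument is correct. The paper itself gives no proof: it simply invokes Theorem~3.3 of Agoshkov's monograph. What you have written is the standard route to that result---the Peierls integral equation for $\aver{u}$ together with a Schur/Neumann contraction estimate---so there is no real methodological divergence, only a difference in explicitness. One useful byproduct of spelling the argument out, which you already flag, is that the contraction constant $\rho = 1 - e^{-\sigma_s L/\eps}$ tends to $1$ as $\eps \to 0$; this is exactly why the paper must work harder in Lemma~\ref{lem:A1} of the Appendix, where an $\eps$-uniform $L^2$ bound on $(I-\cK_1)^{-1}\cK_2$ is obtained via a Fourier-side computation rather than a raw Neumann series.
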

\begin{proof}
	This lemma is a direct conclusion from Theorem 3.3 in~\cite{agoshkov2012boundary}.
\end{proof}
\begin{lemma}\label{lem:DTN}
	If the boundary source $f\in L^2(\partial D)$ {and $s > \frac{d+4}{2}$}, then $\Lambda_{\sigma_a} f\in H^{-1/2}(\partial D)\subset H^{-s}(\partial D)$.
\end{lemma}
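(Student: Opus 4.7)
The plan is to view $H^{-1/2}(\partial D)$ as the dual of $H^{1/2}(\partial D)$ and control the dual pairing $\langle\Lambda_{\sigma_a} f,\phi\rangle_{\partial D}$ by $\|f\|_{L^2(\partial D)}\|\phi\|_{H^{1/2}(\partial D)}$. Given $\phi\in H^{1/2}(\partial D)$, I would first invoke the trace/extension theorem to produce $\Phi\in H^1(D)$ with $\Phi|_{\partial D}=\phi$ and $\|\Phi\|_{H^1(D)}\lesssim\|\phi\|_{H^{1/2}(\partial D)}$. The main computation is to multiply $\cJ$ by $\phi$, write the boundary integral over $\partial D$ (including both $\Gamma_+$ and $\Gamma_-$, the latter using $u=f$), and turn it into a volume integral via the divergence theorem applied to $\Phi v u$ for each fixed $v$.

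The key algebraic cancellation comes when we substitute the RTE~\eqref{eq:RTE} for $v\cdot\nabla u$ inside the volume integral and then integrate in $v\in\Omega$. Because the collision and scattering contributions are $-(\eps\sigma_a+\sigma_s/\eps)u+(\sigma_s/\eps)\langle u\rangle$, integration in $v$ collapses the $u$-terms to $\langle u\rangle$, the $\sigma_s/\eps$ parts cancel exactly, and only $-\eps\sigma_a\langle u\rangle$ survives. The resulting identity reads
\begin{equation*}
\eps\,\langle\cJ,\phi\rangle_{\partial D}
=\int_{\Omega}\!\!\int_{D}(v\cdot\nabla\Phi)\,u\,dx\,d\mu(v)
-\eps\int_{D}\Phi\,\sigma_a\,\langle u\rangle\,dx.
\end{equation*}
Dividing by $\eps$ and applying Cauchy--Schwarz on each term yields
\begin{equation*}
|\langle\cJ,\phi\rangle_{\partial D}|
\le \tfrac{1}{\eps}\|\nabla\Phi\|_{L^2(D)}\|u\|_{L^2(D\times\Omega)}
+\|\sigma_a\|_{\infty}\|\Phi\|_{L^2(D)}\|\langle u\rangle\|_{L^2(D)}.
\end{equation*}
Invoking Lemma~\ref{lem:CURRENT} with $p=2$ replaces $\|u\|_{L^2(D\times\Omega)}$ and $\|\langle u\rangle\|_{L^2(D)}$ by a constant multiple of $\|f\|_{L^2(\partial D)}$, and the extension bound replaces $\|\Phi\|_{H^1(D)}$ by $\|\phi\|_{H^{1/2}(\partial D)}$. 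Hence $|\langle\cJ,\phi\rangle|\lesssim(\eps^{-1}+\|\sigma_a\|_\infty)\|f\|_{L^2(\partial D)}\|\phi\|_{H^{1/2}(\partial D)}$, which is precisely the statement $\cJ\in H^{-1/2}(\partial D)$. The embedding $H^{-1/2}(\partial D)\subset H^{-s}(\partial D)$ is automatic since $s>\tfrac{d+4}{2}>\tfrac{1}{2}$.

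The main subtlety, and the only step that is more than bookkeeping, is justifying the divergence theorem above: $u$ is only known a priori to lie in $L^2(D\times\Omega)$, so $\Phi v u$ is not classically smooth. However, the RTE itself yields $v\cdot\nabla u=-(\eps\sigma_a+\sigma_s/\eps)u+(\sigma_s/\eps)\langle u\rangle\in L^2(D\times\Omega)$, so one may apply the standard trace theorem for functions in the graph space $\{u\in L^2(D\times\Omega):v\cdot\nabla u\in L^2(D\times\Omega)\}$ (as developed in Agoshkov's book cited for Lemma~\ref{lem:CURRENT}) to make the integration by parts against $\Phi\in H^1(D)$ rigorous. Apart from this regularity check, the argument is a direct computation.
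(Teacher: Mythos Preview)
Your proposal is correct and follows essentially the same route as the paper: test against $\phi\in H^{1/2}(\partial D)$, extend to $\Phi\in H^1(D)$, integrate by parts using the RTE so that the $\sigma_s/\eps$ terms cancel, and finish with Cauchy--Schwarz together with Lemma~\ref{lem:CURRENT}. Your extra paragraph justifying the divergence theorem via the graph-space trace theory is a welcome addition that the paper leaves implicit.
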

\begin{proof}
	To see $\psi = \Lambda_{\sigma_a} f\in H^{-1/2}(\partial D)$, we take an arbitrary function $h\in H^{1/2}(\partial D)$, then by Fubini's theorem, 
	\begin{equation}
	\left| \int_{\partial D} \psi(x) h(x) dS(x)\right|  =  \left|  \int_{\Omega} \int_{\partial D} v\cdot n_x u(x,v)h(x)  dS(x) d\mu(v)\right|.
	\end{equation}
	On the other hand, $h\in H^{1/2}(\partial D)$, then there exists a linear bounded extension operator $\cE: H^{1/2}(\partial D)\to H^{1}(D)$ with trace of $\cE h = h$ on $\partial D$. We use $\hat{h}$ to denote the extension. Using integration by parts, 
	\begin{equation}\nonumber
	\begin{aligned}
	\int_{\partial D} v\cdot n_x u(x,v)h(x)  dS(x)  =&\, \int_{D} v\cdot \nabla \hat{h}(x) u(x, v) dx \\& -\int_{D}\left(\eps \sigma_a + \frac{1}{\eps} \sigma_s(x) \right)u(x, v)  \hat{h}(x) dx +\frac{1}{\eps} \int_{D}\sigma_s(x) \aver{u}  \hat{h}(x) dx,
	\end{aligned}
	\end{equation}
	then we have following estimate
	\begin{equation}
	\begin{aligned}
	\left| \int_{\partial D} \psi(x) h(x) dS(x)\right| \le \left| \int_{\Omega} \int_D v\cdot \nabla  \hat{h}(x) u(x, v) dx d\mu(v) \right| + \left|   \int_D \eps \sigma_a(x)  \hat{h}(x)\aver{u} dx  \right|.
	\end{aligned}
	\end{equation}
	By the Cauchy-Schwartz inequality, there exists a constant $C, \tilde{C}>0$ such that
	\begin{equation}\nonumber
	\left| \int_{\partial D} \psi(x) h(x) dS(x)\right|\le C\left( \|u\|_{L^2(D\times\Omega)} \|\nabla \hat{h}\|_{L^2(D)} + \|\aver{u}\|_{L^2(D)}\|\hat{h}\|_{L^2(D)}\right)\le \tilde{C}\|f\|_{L^2(\partial D)}\|h\|_{H^{1/2}(\partial D)}.
	\end{equation}
\end{proof}

\begin{lemma}\label{lem:ADJ}
	If  $u$ is the solution to the RTE~\eqref{eq:RTE} and $w$ satisfies the following adjoint radiative transfer equation with outgoing boundary condition,
	\begin{equation}\label{eq:ADJ}
	\begin{aligned}
	-v\cdot \nabla w + (\eps \sigma_a + \frac{1}{\eps}\sigma_s) w &= \frac{1}{\eps}\sigma_s \aver{w}\quad&\text{ in }& D\times\Omega,\\ 
	w(x, v) &= g(x),\quad&\text{ on }&\Gamma_{+}.
	\end{aligned}
	\end{equation}
	then $u$ and $w$ satisfy the following relation,
	\begin{equation}
	\int_{\partial D} \left(\int_{\Omega} n_x \cdot v u(x, v)  d\mu(v) \right)g(x) dS(x) = -\int_{\partial D} \left(\int_{\Omega} n_x\cdot v w(x, v) d\mu(v)\right)  f(x) dS(x).
	\end{equation}
\end{lemma}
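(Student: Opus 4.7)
The plan is to set up a Green-type identity for the transport operator, following the standard procedure of multiplying each equation by the other solution and subtracting. Specifically, I would multiply equation~\eqref{eq:RTE} by $w(x,v)$, multiply the adjoint equation~\eqref{eq:ADJ} by $u(x,v)$, and subtract. The absorption terms $(\eps\sigma_a+\frac{1}{\eps}\sigma_s)uw$ then cancel, and what remains on the left is $(v\cdot\nabla u)w+(v\cdot\nabla w)u = v\cdot\nabla(uw)$, while the right reduces to $\frac{\sigma_s(x)}{\eps}[\langle u\rangle w - \langle w\rangle u]$.

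Next I would integrate the identity over $D\times\Omega$. The scattering contribution vanishes after integrating first in $v$ at fixed $x$, since
\begin{equation*}
\int_{\Omega}\langle u\rangle(x)\,w(x,v)\,d\mu(v) = \langle u\rangle(x)\langle w\rangle(x) = \int_{\Omega}\langle w\rangle(x)\,u(x,v)\,d\mu(v).
\end{equation*}
For the transport term, the divergence theorem (applied in $x$ for each fixed $v$) converts $\int_D v\cdot\nabla(uw)\,dx$ into the boundary flux $\int_{\partial D}(n_x\cdot v)\,u(x,v)w(x,v)\,dS(x)$. Thus I arrive at the intermediate identity
\begin{equation*}
\int_{\partial D\times\Omega}(n_x\cdot v)\,u(x,v)\,w(x,v)\,dS(x)\,d\mu(v)=0.
\end{equation*}

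The final step is to convert this mixed identity into the symmetric, full-$\Omega$ form stated in the lemma. I would split the boundary integral according to the sign of $n_x\cdot v$, use the boundary conditions $u|_{\Gamma_-}=f$ and $w|_{\Gamma_+}=g$, and then add back and subtract the ``missing'' half-space contributions using the elementary identity $\int_{\Omega}(n_x\cdot v)\,d\mu(v)=0$. Concretely, writing $c_d=\int_{n_x\cdot v>0}(n_x\cdot v)\,d\mu(v)$, the $\Gamma_-$ piece of $\int_\Omega(n_x\cdot v)u\,d\mu$ equals $-c_d f(x)$, and the $\Gamma_+$ piece of $\int_\Omega(n_x\cdot v)w\,d\mu$ equals $c_d g(x)$; these two terms pair up to cancel exactly when one forms $\int_{\partial D}g\,\int_\Omega(n_x\cdot v)u\,d\mu\,dS + \int_{\partial D}f\,\int_\Omega(n_x\cdot v)w\,d\mu\,dS$, leaving only the already-proved half-space identity.

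I do not expect significant technical obstacles here: the argument is a routine integration-by-parts/adjoint calculation, and the regularity provided by Lemmas~\ref{lem:CURRENT} and~\ref{lem:DTN} (together with the well-posedness of~\eqref{eq:ADJ}, which is symmetric to~\eqref{eq:RTE} under $v\mapsto-v$) justifies all manipulations. The only point that demands care is the bookkeeping of signs when converting half-space to full-space angular integrals at the boundary; getting the factor $-1$ on the right-hand side of the claimed identity hinges precisely on the cancellation $c_d g f - c_d g f = 0$ described above.
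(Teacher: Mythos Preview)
Your proposal is correct and is precisely the computation the paper has in mind: the paper's own proof is the single sentence ``The equality is obvious by divergence theorem,'' and your multiply--subtract--integrate--divergence-theorem argument, together with the $c_d fg$ bookkeeping that converts the half-space identity to the full-$\Omega$ statement, is exactly the standard way to unpack that sentence.
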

\begin{proof}
	The equality is obvious by divergence theorem.
\end{proof}

\section{The basic estimate}\label{sec:ESTIMATE}
Let $u_0(x, v)$ be the solution to the following radiative transport equation with zero absorption coefficient and denote the associated measurement operator by $\Lambda_0$. 
\begin{equation}\label{eq:BASE}
\begin{aligned}
v\cdot \nabla u_0 + \frac{1}{\eps} \sigma_s(x) u_0(x, v) &= \frac{1}{\eps} \sigma_s(x) \aver{u_0},\quad &\text{ in }& D\times \Omega,\\
u_0(x,v) &= f(x),\quad&\text{ on }& \Gamma_{-}.
\end{aligned}
\end{equation}
Assume $u(x, v)$ be the solution to RTE~\eqref{eq:RTE} and $\phi = u - u_0$, then $\phi$ satisfies the following RTE with vacuum incoming boundary condition,
\begin{equation}\label{eq:PERTURB}
\begin{aligned}
v\cdot \nabla \phi(x,v) + \left(\eps \sigma_a(x) + \frac{1}{\eps}\sigma_s(x) \right)\phi(x, v) &= \frac{1}{\eps} \sigma_s(x)\aver{\phi} - \eps \sigma_a(x) u_0,\quad &\text{ in }& D\times\Omega, \\
\phi(x, v) &= 0,\quad &\text{ on }&\Gamma_{-}.
\end{aligned}
\end{equation} 
For each admissible $\sigma_a(x)$, we define the linear operator $\Gamma(\sigma_{a}) := \Lambda_{\sigma_{a}} - \Lambda_0\in \cL(\cX, \cY)$. For any $f, g\in H^{s}(\partial D)$, we have the following equality.
\begin{equation}\label{eq:ENTRIES}
\begin{aligned}
\langle \Gamma(\sigma_a) f, g\rangle &= \frac{1}{\eps} \int_{\partial D \times \Omega} v\cdot n_x \phi(x, v) \overline{g(x)} d\mu(v) dS(x) \\
&=  - \int_K \sigma_a(x) \overline{\hat{g}(x)} \aver{u}(x)dx + \frac{1}{\eps} \int_{D} \nabla \overline{\hat{g}(x)} \cdot \left( \int_{\Omega} v \phi(x, v) d\mu(v)\right) dx
\end{aligned}
\end{equation}
where $\hat{g}(x) \in H^1(D)$ is an arbitrary extension of $g(x)$ in $D$ and $\overline{g(x)}$ and $\overline{\hat{g}(x)}$ are the complex conjugates of $g(x)$ and $\hat{g}(x)$ respectively. Similarly, the above quantity can also be computed by the adjoint RTE using Lemma~\ref{lem:ADJ}.
\begin{equation}\label{eq:ENTRIES2}
\langle \Gamma(\sigma_a) f, g\rangle = \int_{K} \sigma_a(x) \hat{f}(x)\overline{\aver{w}(x)}dx - \frac{1}{\eps} \int_D \nabla \hat{f}(x) \cdot \int_{\Omega} v\overline{\varphi(x,v)} d\mu(v) dx,
\end{equation}
where $w_0$ and $w$ are the solutions to the following adjoint radiative transfer equations,
\begin{equation}\label{eq:PERTURB2}
\begin{aligned}
-v\cdot \nabla w_{0}+ \frac{1}{\eps} \sigma_s(x) (x) w_{0}(x, v) &= \frac{1}{\eps} \sigma_s(x) \aver{w_{0}},\quad &\text{ in }& D\times \Omega,\\
-v\cdot \nabla w_{} + (\eps \sigma_a(x) + \frac{1}{\eps} \sigma_s (x) )w_{}(x, v) &= \frac{1}{\eps} \sigma_s(x) \aver{w_{}},\quad &\text{ in }& D\times \Omega,\\
w_{0}(x,v) = w_{}(x, v) &= g(x),\quad&\text{ on }& \Gamma_{+}.
\end{aligned}
\end{equation}
and $\varphi = w_{} - w_{0}$, $\hat{f}\in H^1(D)$ is an arbitrary extension of $f(x)$ in $D$, $\overline{\aver{w}(x)}$ and $\overline{\varphi(x, v)}$ are the complex conjugates of $\aver{w}(x)$ and $\aver{\varphi(x, v)}$ respectively. 
 Combining the above two representations for $\langle \Gamma(\sigma_a) f, g\rangle$, we introduce a basic estimate in the following lemma.
\begin{lemma}\label{lem:ESTIMATE}
	There is a constant $C_0 = C_0(r_0, d, s)$ such that for any $4$-tuple $(m,j,n,k)$ that $m, n\ge 0$ and $j \le p_m$, $k\le p_n$, 
	\begin{equation}
|\langle \Gamma(\sigma_a) Y_{mj}, Y_{nk}\rangle | \le C_0 \|\sigma_a\|_{\infty} (1 +l)  \left( r_0^{l} + \eps (1 + l)  \right).
	\end{equation}
	where $l = \max(m,n)$.
\end{lemma}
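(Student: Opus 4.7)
\emph{Proof plan.} The plan is to use the two dual representations \eqref{eq:ENTRIES} and \eqref{eq:ENTRIES2} with solid harmonic extensions of $Y_{mj}$ and $Y_{nk}$ into $D$. By the symmetry between the two representations, it suffices (up to relabeling) to consider the case $n \ge m$ and use \eqref{eq:ENTRIES}; the opposite case follows from \eqref{eq:ENTRIES2}. I would take $\hat g(x) = |x|^n Y_{nk}(x/|x|)$, the unique harmonic polynomial of degree $n$ on $D = B(0,1)$ with boundary trace $Y_{nk}$. Two key properties then drive the estimate: on $K = B(0, r_0)$, a direct polar--coordinate calculation gives $\|\hat g\|_{L^2(K)} \le C r_0^n$; and harmonicity together with integration by parts yields $\int_D |\nabla \hat g|^2 \,dx = n \int_{\partial D} |Y_{nk}|^2 \,dS$, so $\|\nabla \hat g\|_{L^2(D)} \le C\sqrt{1+n}$.

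Substituting this $\hat g$ into \eqref{eq:ENTRIES} splits the pairing into an interior integral and a gradient--current integral. For the interior integral $\int_K \sigma_a \overline{\hat g}\,\aver{u}\,dx$, Cauchy--Schwarz together with Lemma \ref{lem:CURRENT} at $p = 2$ (which yields $\|\aver{u}\|_{L^2(D)} \le C\|Y_{mj}\|_{L^2(\partial D)} \le C$) gives a bound $\lesssim \|\sigma_a\|_\infty r_0^n$; the extra $(1+l)$ factor in the statement absorbs pointwise bounds on spherical harmonics (e.g.\ $\|Y_{nk}\|_\infty \lesssim (1+n)^{(d-2)/2}$) that may enter when $L^\infty$ estimates are required. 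For the gradient term $\frac{1}{\eps}\int_D \nabla \overline{\hat g} \cdot J_\phi\,dx$, I would extract the $\eps$--factor from the diffusion--limit scaling of the perturbation equation \eqref{eq:PERTURB}. Rescaling $\phi = \eps\psi$ turns the right--hand side into the $O(1)$ source $-\sigma_a u_0$, and RTE well--posedness (in the spirit of Lemma \ref{lem:CURRENT}) yields $\|\psi\|_{L^2(D\times\Omega)} \le C\|\sigma_a\|_\infty$ uniformly in $\eps$. Combined with the conservation identity $\nabla \cdot J_\phi = -\eps \sigma_a \aver{u}$, obtained by angular averaging \eqref{eq:PERTURB}, and the diffusion--approximation relation $\phi - \aver{\phi} = -(\eps/\sigma_s)\,v\cdot\nabla\aver{\phi} + O(\eps^2)$ for the anisotropic mode, this gives $\|J_\phi\|_{L^2(D)} \lesssim \eps^2\|\sigma_a\|_\infty$. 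Cauchy--Schwarz with $\|\nabla\hat g\|_{L^2}\le C\sqrt{1+n}$ then produces the gradient term $\lesssim \eps\sqrt{1+n}\,\|\sigma_a\|_\infty$, which is absorbed into the $\eps(1+l)^2\|\sigma_a\|_\infty$ contribution in the claim.

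The main difficulty is the uniform--in--$\eps$ refinement $\|J_\phi\|_{L^2}\lesssim \eps^2\|\sigma_a\|_\infty$: a naive $L^2$ bound based only on source size gives $\|J_\phi\|_{L^2}\lesssim \eps\|\sigma_a\|_\infty$ and is short by a factor of $\eps$. Recovering it requires genuinely decomposing $\phi$ into its isotropic and anisotropic modes and invoking elliptic regularity on the isotropic mean $\aver{\phi}$, i.e., running the diffusion--limit asymptotic expansion at the level of $L^2$ estimates so that the smallness of $\eps$ is actually exploited. This is the step where the split between the transport--dominated contribution $(1+l)r_0^l$ and the diffusion--dominated contribution $\eps(1+l)^2$ really becomes visible.
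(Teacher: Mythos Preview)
Your overall architecture matches the paper's: use the dual representations \eqref{eq:ENTRIES}--\eqref{eq:ENTRIES2}, take the solid harmonic extension $\hat g(x)=|x|^nY_{nk}(x/|x|)$, and split into an interior term plus a gradient--current term. The gap is in how you close the gradient--current term.

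Your claimed estimate $\|J_\phi\|_{L^2(D)}\lesssim \eps^2\|\sigma_a\|_\infty$ is false in general, and this is not repaired by the mode decomposition you sketch. Running the diffusion expansion on \eqref{eq:PERTURB} gives
\[
J_\phi(x)\;=\;-\frac{\eps}{d\sigma_s}\,\nabla\Phi(x)\;+\;\eps^2\,\tilde R(x),
\]
where $\Phi$ solves the elliptic problem $-\tfrac{1}{d\sigma_s}\Delta\Phi+\sigma_a\Phi=-\sigma_a U_0$ with $\Phi|_{\partial D}=0$. The leading piece $-\tfrac{\eps}{d\sigma_s}\nabla\Phi$ is genuinely $O(\eps)$ in $L^2(D)$, not $O(\eps^2)$; nothing in the conservation law $\nabla\!\cdot J_\phi=-\eps\sigma_a\aver{u}$ or in elliptic regularity for $\aver\phi$ makes $\nabla\Phi$ small. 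So bounding $|I_2|$ by Cauchy--Schwarz as $\tfrac{1}{\eps}\|\nabla\hat g\|_{L^2}\|J_\phi\|_{L^2}$ is one power of $\eps$ short, exactly as you noticed, and cannot be rescued by a pure norm bound on $J_\phi$.

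The missing idea is an \emph{algebraic cancellation}, not a smallness estimate. Because $\hat g$ is harmonic on $D$ and $\Phi=0$ on $\partial D$,
\[
\int_D \nabla\overline{\hat g}\cdot\nabla\Phi\,dx
=-\int_D \overline{\Delta\hat g}\,\Phi\,dx+\int_{\partial D}\overline{\partial_n\hat g}\,\Phi\,dS
=0,
\]
so the $O(\eps)$ part of $J_\phi$ contributes nothing to $I_2$. Only the remainder $\eps^2\tilde R$ survives, and then Cauchy--Schwarz with $\|\nabla\hat g\|_{L^2}\lesssim\sqrt{1+n}$ and $\|\tilde R\|_{L^2}\lesssim\|\sigma_a\|_\infty\|u_{0,mj}\|_{L^2}$ gives the desired $\eps(1+l)^2\|\sigma_a\|_\infty$. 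This integration--by--parts cancellation is the actual mechanism behind the transport/diffusion split in the statement.

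A secondary issue: your appeal to Lemma~\ref{lem:CURRENT} for $\|\aver u\|_{L^2(D)}\le C$ does not give a constant uniform in $\eps$; that lemma only asserts membership in $L^p$. The paper obtains an $\eps$--independent bound via the Peierls integral equation (Lemma~\ref{lem:A1}), at the price of the $H^{3/2}(\partial D)$ norm of the boundary data, i.e.\ a factor $(1+m)^{3/2}$. That, and the analogous $\|u_{0,mj}\|_{L^2(D\times\Omega)}\lesssim(1+m)^{3/2}$ from Lemma~\ref{lem:A2}, are the true sources of the polynomial factor $(1+l)$ in the statement, not pointwise bounds on spherical harmonics.
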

\begin{proof}
	The left-hand-side has the following representations by~\eqref{eq:ENTRIES} and~\eqref{eq:ENTRIES2}. Each representation has two parts, we denote by $I_{i, mjnk}$ and $L_{i,mjnk}$, $i=1,2$, respectively.
	\begin{equation}\nonumber
	\begin{aligned}
		\langle \Gamma(\sigma_a) Y_{mj}, Y_{nk}\rangle  &=- \int_K \sigma_a(x) \overline{\hat{Y}_{nk}(x)} \aver{u_{mj}}(x)dx + \frac{1}{\eps} \int_{D} \nabla \overline{\hat{Y}_{nk}(x)} \cdot \int_{\Omega} v \phi_{mj}(x, v) d\mu(v) dx \\
		&= I_{1, mjnk} + I_{2,mjnk}.\\
		\langle \Gamma(\sigma_a) Y_{mj}, Y_{nk}\rangle  &= \int_{\Omega} \sigma_a(x) \hat{Y}_{mj}(x)\overline{\aver{w_{ nk}}(x)}dx - \frac{1}{\eps} \int_D \nabla \hat{Y}_{mj}(x) \cdot \int_{\Omega} v\overline{\varphi_{nk}(x,v)} d\mu(v) dx \\
		 &= L_{1, mjnk} + L_{2,  mjnk},
	\end{aligned}
	\end{equation}
	The functions $\hat{Y}_{mj}$ and $\hat{Y}_{nk}$ are arbitrary $H^1(D)$ extensions of spherical harmonics $Y_{mj}$ and $Y_{nk}$ respectively. For the first representation, $u_{mj}$ is the solution to~\eqref{eq:RTE} with $f = Y_{mj}$. Denote $u_{0,mj}$ be the solution to~\eqref{eq:BASE} with $f = Y_{mj}$, the function $\phi_{mj}$ satisfies the following RTE,
	\begin{equation}
	\begin{aligned}
	v\cdot \nabla \phi_{mj} + \left(\eps \sigma_a(x) + \frac{1}{\eps}\sigma_s(x)\right) \phi_{mj}&= \frac{1}{\eps} \sigma_s(x)\aver{\phi_{mj}} - \eps \sigma_a(x) u_{0,mj},\quad &\text{ in }& D\times\Omega, \\
	\phi_{mj}(x, v) &= 0,\quad &\text{ on }&\Gamma_{-}.
	\end{aligned}
	\end{equation} 
	For the second representation, denote $w_{0, nk}$  and $w_{nk}$ be the solutions to~\eqref{eq:PERTURB2} with $g = Y_{nk}$, the function $\varphi_{nk}$ satisfies the following adjoint RTE,
	\begin{equation}
	\begin{aligned}
		-v\cdot\nabla \varphi_{nk} + \left(\eps \sigma_a(x) + \frac{1}{\eps}\sigma_s(x)  \right) \varphi_{nk} &= \frac{1}{\eps} \sigma_s(x)\aver{\varphi_{nk}} - \eps \sigma_a(x) w_{0,nk},\quad &\text{ in }& D\times\Omega, \\
		\varphi_{nk}(x, v) &= 0,\quad &\text{ on }&\Gamma_{+}.
	\end{aligned}
	\end{equation}
For each $4$-tuple $(m,j, n, k)$, it is obvious that the quantity $|\langle \Gamma(\sigma_a) Y_{mj}, Y_{nk}\rangle|$ is bounded by
\begin{equation}
|\langle \Gamma(\sigma_a) Y_{mj}, Y_{nk}\rangle| \le \min\left( |I_{1,mjnk}| +|I_{2,mjnk}|, |L_{1,mjnk}| + |L_{2,mjnk}| \right).
\end{equation}
If we define $M_{i,mjnk}$ ($i=1,2$) as 
\begin{equation}
M_{i,mjnk} = \begin{cases}
L_{i,mjnk},\quad &m\ge n ,\\
I_{i,mjnk}, \quad &n > m,
\end{cases}
\end{equation}
$|\langle \Gamma(\sigma_a) Y_{mj}, Y_{nk}\rangle|\le |M_{1,mjnk}| + |M_{2,mjnk}|$. Hence it suffices to prove two inequalities as follows.
\begin{enumerate}
	\item $|M_{1,mjnk}|\le C_0 \|\sigma_a\|_{\infty} (1 + \max(m, n))r_0^{\max(m, n)}$.
	\item $|M_{2,mjnk}|\le C_0 \|\sigma_a\|_{\infty}\eps (1 + \max(m, n))^2$.
\end{enumerate}
\noindent {\bf{Estimate of $M_{1,mjnk}$.}} When $n > m$, $M_{1,mjnk} = I_{1,mjnk}$. Since the spherical harmonic $Y_{nk}$ can be naturally extended to a harmonic function by $\hat{Y}_{nk}(x) = |x|^n Y_{nk}(x/|x|)$, we obtain the following basic estimate by Cauchy-Schwartz inequality,
\begin{equation}
\begin{aligned}
|I_{1,mjnk}| &= \left| \int_{K} \sigma_a(x) |x|^n \overline{Y_{nk}(x/|x|) }\aver{u_{mj}}(x) dx \right| \\&\le \|\sigma_a\|_{\infty} \left( \int_0^{r_0} r^{2n + d - 1} dr \right)^{1/2} \|\aver{u_{mj}}\|_{K}\\
&\le \|\sigma_a\|_{\infty}\frac{1}{\sqrt{2n + d}} r_0^{n + d/2} \|\aver{u_{mj}}\|_K. 
\end{aligned}
\end{equation}
To estimate $\|\aver{u_{mj}}\|_{K}$,  we consider the Peierls integral equation for $\aver{u_{mj}}$~\cite{ren2016fast,vladimirov1963mathematical,agoshkov2012boundary}, 
\begin{equation}
\aver{u_{mj}}(x) = \int_{D} \cK(x, y) \frac{\sigma_s}{\eps} \aver{u_{mj}}(y) dy + \int_{\partial D} \cK(x, y) \frac{y - x}{|y-x|}\cdot n_y {Y}_{mj}(y) dS(y),
\end{equation}
where the integral kernel $\cK$ is 
\begin{equation}
\begin{aligned}
\cK(x, y) &= \frac{1}{\nu_d}\frac{E(x, y)}{|x-y|^{d-1}},\\
E(x, y) &= \exp\left( -\frac{|x-y|}{\eps}\int_{0}^1 (\eps^2\sigma_a  + \sigma_s)(x + t(y-x)) dt\right),
\end{aligned}
\end{equation}
and $\nu_d$ is the area of the unit sphere $\bbS^{d-1}$. If we denote integral operators $\cK_1$ and $\cK_2$ as
\begin{equation}\label{eq:Ks}
\begin{aligned}
\cK_1 f(x) &= \int_D \cK(x, y)  \frac{\sigma_s}{\eps} f(y)dy,\\
\cK_2 f(x) &= \int_{\partial D} \cK(x,y)\frac{y-x}{|y-x|}\cdot n_y f(y) dS(y)
\end{aligned}
\end{equation}
then the equation is solved by
\begin{equation}
\aver{u_{mj}} = (I - \cK_1)^{-1} \cK_2 Y_{mj}.
\end{equation}
Using Lemma~\ref{lem:A1}, we have
\begin{equation}
\| \aver{u_{mj}} \|_K \le \|\aver{u_{mj}}\|_{D}  \le C \|Y_{mj}\|_{H^{3/2}(\partial D)}\le C (1 + m)^{3/2}.
\end{equation}
The constant $C$ only depends on $\sigma_s$ and $d$, which implies $$|I_{1,mjnk}|\le \|\sigma_a\|_{\infty} \frac{C(1 + m)^{3/2}}{\sqrt{2n + d}} r_0^{n+d/2}\le C \|\sigma_a\|_{\infty}(1+n) r_0^n.$$ On the other hand, when $m \ge n$, $M_{1,mjnk} = L_{1,mjnk}$, using the equality in Lemma~\eqref{lem:ADJ} for adjoint RTE and following the above argument, we obtain the similar estimate for $|L_{1,mjnk}|$,
\begin{equation}
|L_{1,mjnk}| \le C \|\sigma_{a}\|_{\infty}(1+m) r_0^{m},
\end{equation}
and therefore $|M_{1,mjnk}|\le C \|\sigma_a\|_{\infty}  (1+\max(m,n))r_0^{\max(m,n)}$.
\\\\
\noindent{\bf{Estimate of $M_{2,mjnk}$.}} When $n > m$,  $M_{2, mjnk} = I_{2, mjnk}$. Since $\eps \ll 1$, we can estimate the solution $u_{0, m,j}$ to RTE~\eqref{eq:BASE} using the asymptotic expansion introduced in~\cite{Wu2015}. The asymptotic analysis of the solution to~\eqref{eq:BASE} has been studied in~\cite{Li2017,Wu2015,bensoussan1979boundary} and the references therein. For a general boundary source depending on the angular variable $f=f(x, v)$ in~\eqref{eq:BASE}, the asymptotic expansion in~\cite{bensoussan1979boundary} could potentially capture the incorrect behavior in the boundary layer~\cite{Wu2015}, and the interior solution's behavior of the asymptotic expansion of~\cite{bensoussan1979boundary} is characterized in~\cite{Li2017}. In our case, $f = Y_{mj}(x)$ is independent of angular variable and $C^{\infty}$ smooth.
Let $\tilde{f}$ be the harmonic extension of $f$, by Lemma~\ref{lem:A1}, we have 
\begin{equation}
    \aver{u_{0, mj}} = \tilde{f} - (I -\cK_1)^{-1} \int_{\partial D} G(|x-y|) \frac{\partial\tilde{f}}{\partial n} dS(y),
\end{equation}
where $G$ is the integral kernel $$	G(r) = -\frac{1}{\nu_d}\int_{r}^{\infty} \frac{e^{-\frac{\sigma_s}{\eps}\rho}}{\rho^{d-1}} d\rho. $$ 
Since we have $\|(I - K_1)^{-1}\|_{L^2(D)\to L^2(D)} = \cO(\eps^2)$ and $G: L^2(\partial D)\to L^2(D)$ is bounded by the single layer potential operator, 
\begin{equation}
   \left\| \int_{\partial D} G(|x-y|) \frac{\partial\tilde{f}}{\partial n} dS(y) \right\|_{L^2(D)} \le C \|\frac{\partial\tilde{f}}{\partial n}\|_{L^2(\partial D)}.
\end{equation}
Therefore $     \aver{u_{0, mj}} = \tilde{f} + h$ that $\|h\|_{L^2(D)}= \cO(\eps^2) \|\frac{\partial\tilde{f}}{\partial n}\|_{L^2(\partial D)}$. Bring into the equation~\eqref{eq:BASE}, we solve 
\begin{equation}
    u_{0,mj}(x, v) = u_{0,mj}^B(x,v) + \int_0^{\tau_{-}(x,v)} e^{-\frac{\sigma_s}{\eps} l} \frac{\sigma_s}{\eps}(\tilde{f}(x-lv) + h(x-lv)) dl 
\end{equation}
where $u_{0,mj}^B = f(x-\tau_{-}(x,v) v) e^{-\frac{\sigma_s}{\eps}\tau_{-}(x,v)}$ which is exponentially small (in $L^2(K\times\Omega)$ sense) for $x\in K$. The interior part 
\begin{equation}
\begin{aligned}
    u_{0,mj}^I(x,v) &= \int_0^{\tau_{-}(x,v)} e^{-\frac{\sigma_s}{\eps} l} \frac{\sigma_s}{\eps}(\tilde{f}(x-lv) + h(x-lv)) dl \\
    &=\int_0^{\tau_{-}(x,v)} e^{-\frac{\sigma_s}{\eps} l} \frac{\sigma_s}{\eps}\left( \tilde{f}(x) - lv\cdot \nabla \tilde{f} + \frac{l^2}{2}  v^{\otimes 2} \nabla^2\tilde{f}  + \cdots \right) dl \\&\quad + \int_0^{\tau_{-}(x,v)} e^{-\frac{\sigma_s}{\eps} l} \frac{\sigma_s}{\eps} h(x-lv) dl,
\end{aligned}
\end{equation}
where the integral operator $T: L^2(D)\to L^2(D\times\Omega)$ that $T h := \int_0^{\tau_{-}(x,v)} e^{-\frac{\sigma_s}{\eps} l} \frac{\sigma_s}{\eps} h(x-lv) dl$ is uniformly bounded due to Lemma~\ref{lem:A2}. Therefore the asymptotic expansion for $u_{0,mj}^I$ has the form (in $L^2(K\times\Omega)$ sense)
\begin{equation}\label{EQ: MJ}
\|u_{0,mj}^I (x, v) - \tilde{f}(x) + \frac{\eps}{\sigma_s} v\cdot \nabla \tilde{f}(x)\| = \cO(\eps^2)\left(\|\nabla^2 \tilde{f}\|_{L^{2}(D)} + \|\frac{\partial\tilde{f}}{\partial n}\|_{L^2(\partial D)}\right),
\end{equation}
where the diffusion approximation $\tilde{f}(x) = |x|^m Y_{m,j}(x/|x|)$ is the harmonic extension of $Y_{mj}$. Next, we estimate the solution $\phi_{mj}$ in~\eqref{eq:PERTURB} which satisfies the following RTE
\begin{equation}
v\cdot \nabla \phi_{mj} + \left(\eps \sigma_a + \frac{1}{\eps}\sigma_s\right) \phi_{mj} = \frac{1}{\eps}\sigma_s \aver{\phi_{mj}} - \eps\sigma_a u_{0,mj}
\end{equation}
with vacuum incoming boundary condition. For the source term on the right-hand-side, note that $\supp\sigma_a \subset K = B(0, r_0)$ and the estimate~\eqref{EQ: MJ} in $K\times\Omega$. Therefore $\phi_{mj} = \phi_{mj}^1 + \eps \phi_{mj}^2 + \eps^2 R_{mj}$ satisfies that~\cite{dautray1993mathematical}
\begin{equation}
\|\phi_{mj}^1 -  \Phi_{mj}(x) + \frac{\eps}{\sigma_s} v\cdot \nabla\Phi_{mj} \|_{L^2(D\times \Omega)} =\cO(  \eps^2) \|\tilde{f}\|_{L^2(D)},
\end{equation}
where the first part $\phi_{mj}^1$ is the solution to
\begin{equation}
v\cdot \nabla \phi_{mj}^1 + \left(\eps \sigma_a + \frac{1}{\eps}\sigma_s\right) \phi_{mj} = \frac{1}{\eps}\sigma_s \aver{\phi_{mj}} - \eps\sigma_a \tilde{f}
\end{equation}
and $\Phi_{mj}$ is the solution to the following diffusion equation,
\begin{equation}
\begin{aligned}
-\frac{1}{d}\nabla\cdot \left(\frac{1}{\sigma_s} \nabla \Phi_{mj} \right) + \sigma_a \Phi_{mj} &= -\sigma_a \tilde{f}\quad &\text{ in }& D, \\
\Phi_{mj} + \eps \ell  \partial_n\Phi_{mj}&= 0\quad&\text{ on }&\partial D.
\end{aligned}
\end{equation}
The parameter $\ell$ is the extrapolation length. The second part $\phi_{mj}^2$ satisfies the equation
\begin{equation}
    v\cdot \nabla \phi_{mj}^2 +  \left(\eps \sigma_a + \frac{1}{\eps}\sigma_s\right) \phi_{mj}^2   = \frac{1}{\eps}\sigma_s \aver{\phi_{mj}^2} - \eps \sigma_a v\cdot \nabla \tilde{f}
\end{equation}
so we can estimate $ \|\phi_{mj}^2(x, v)\| = \cO(\eps)\|\nabla \tilde{f}\|_{L^2(D)}$ from the standard expansion method which makes $\eps\phi_{mj}^2$ absorbed into the corrector term $\eps^2 R_{mj}$.
The corrector term $\eps^2 R_{mj}$ satisfies the following estimate
\begin{equation}\label{eq:COR}
\|R_{mj}\|_{L^2(D\times\Omega)}  \le C \|\sigma_a\|_{\infty} \left(\|\nabla^2 \tilde{f}\|_{L^{2}(D)} + \|\frac{\partial\tilde{f}}{\partial n}\|_{L^2(\partial D)}\right)
\end{equation}
for some constant $C$ independent of $\eps$. Let $J_{mj}$ denote the velocity averaged vector field, 
\begin{equation}
J_{mj}(x) = \int_{\Omega} v\phi_{mj}(x, v) d\mu(v) = -\frac{\eps}{d\sigma_s} \nabla \Phi_{mj}(x) + \eps^2 \tilde{R}_{mj},
\end{equation}
where the vector field $\tilde{R}_{mj}$ is 
\begin{equation}\label{eq:RES}
\tilde{R}_{mj} = \int_{\Omega} v R_{mj}(x, v) dv.
\end{equation}
Therefore $I_{2,mjnk}$ is bounded by
\begin{equation}\label{eq:EST2}
\begin{aligned}
|I_{2,mjnk}| &= \left| \frac{1}{\eps} \int_{D} \nabla (|x|^n \overline{Y_{nk}(x/|x|)} )\cdot J_{mj}(x) dx\right| \\&= \frac{1}{\eps}\left|\int_D \nabla (|x|^n \overline{Y_{nk}(x/|x|)})  \cdot \left(-\frac{\eps}{d\sigma_s} \nabla \Phi_{mj}(x) + \eps^2 \tilde{R}_{mj}\right)  dx\right|\\
&\le \eps \left| \int_D \nabla (|x|^n \overline{Y_{nk}(x/|x|)} ) \cdot  \tilde{R}_{mj} dx \right| + \eps \ell \left|\int_{\partial D} \partial_n (|x|^n \overline{Y_{nk}(x/|x|)})  \partial_n\Phi_{mj} ds \right|\\
& \le C \eps \|\sigma_{a}\|_{\infty} \left(\|\nabla^2 \tilde{f}\|_{L^{2}(D)} + \|\frac{\partial\tilde{f}}{\partial n}\|_{L^2(\partial D)}\right) \|\nabla (|x|^n \overline{Y_{nk}(x/|x|)})\|_{L^2(D)}\\&\quad +  \eps\ell \| \partial_n (|x|^n \overline{Y_{nk}(x/|x|)}) \|_{H^{1/2}(\partial D)} \| \partial_n \Phi_{mj}\|_{H^{-1/2}(\partial D)}.
\end{aligned}
\end{equation}
Here we have used integration by parts and the fact that $\Phi_{mj} = -\eps \ell \partial_n\Phi_{mj} $ on $\partial D$ in the second line of~\eqref{eq:EST2}. It is easy to find out there exists constant $C'$ that
\begin{equation}
\|\nabla (|x|^n \overline{Y_{nk}(x/|x|)})\|_{L^2(D)} < C' \sqrt{1 + n}.
\end{equation}
and $\left(\|\nabla^2 \tilde{f}\|_{L^{2}(D)} + \|\frac{\partial\tilde{f}}{\partial n}\|_{L^2(\partial D)}\right)\sim \|f\|_{H^{3/2}(\partial D)}$ is bounded by $O((1+m)^{3/2})$. For the second term, $ \| \partial_n \Phi_{mj}\|_{H^{-1/2}(\partial D)}\le C \|\sigma_a \tilde{f}\|_{L^2(D)}$ and $\| \partial_n (|x|^n \overline{Y_{nk}(x/|x|)}) \|_{H^{1/2}(\partial D)}\le C (1+n)^{3/2}$.  Hence we obtain
\begin{equation}
|I_{2, mjnk}|\le C \eps \|\sigma_a\|_{\infty} ({1 + n})^2.
\end{equation}
When $m \ge n$, $M_{2,mjnk} = L_{2, mjnk}$, we can use the adjoint RTE to acquire a similar estimate
\begin{equation}
|L_{2, mjnk}|\le C \eps \|\sigma_a\|_{\infty} (1 + m)^2.
\end{equation}
Combine the above two estimates, we conclude that $|M_{2,mjnk}|\le C \eps \|\sigma_a\|_{\infty} (1 + \max(m, n))^2$.
\end{proof}

\section{The instability estimate}\label{sec:INSTAB}
In this section, we prove the main theorem with respect to the instability estimate. Before that, we introduce the following definitions based on Kolmogorov's masterwork~\cite{kolmogorov1959}.
\begin{definition}
	Let $(X, \mathsf{d})$ be a metric space and  $\delta > 0$, then we say 
	\begin{enumerate}
		\item A set $Y \subset X$ is a $\delta$-net for $X_1\subset X$ if for any $x\in X_1$ there exists $y\in Y$ such that $\mathsf{d}(x, y)\le \delta$.
		\item  A set $Z\subset X$ is $\theta$-distinguishable if for any distinct $z_1,z_2\in Z$, we have $\mathsf{d}(z_1, z_2)\ge \theta$.
		
	\end{enumerate}
\end{definition}

The following lemma shows that the number of $q$-times differentiable functions grows at least exponentially with its $C^q$ norm. The proof can be easily adapted from~\cite{kolmogorov1959ϵ}.
\begin{lemma}[Kolmogorov]\label{lem:KOLMO}
	Let $d\ge 2$ and $q > 0$. For $\theta, \beta > 0$, consider the metric space
	\begin{equation}
	X_{q\theta\beta} = \{ f \in C_0^q (K): \|f\|_{\infty} \le \theta, \|f\|_{C^q}\le \beta \text{ and }f\ge 0\}
	\end{equation}
	the metric is induced by $L^{\infty}$ norm. Then there is a constant $\mu > 0$ such that for any $\beta > 0$ and $\theta \in (0, \mu\beta)$, there is a $\theta$-distinguishable set $Z\subset X_{q\theta\beta}$, its cardinality satisfies following lower bound.
	\begin{equation}
	|Z| \ge \exp\left(2^{-d-1}(\mu\beta/\theta)^{d/q} \right).
	\end{equation}
\end{lemma}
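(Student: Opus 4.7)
The plan is to reproduce the classical Kolmogorov bump-grid construction: I would build a $2^N$-sized $\theta$-separated family of admissible functions by packing $N$ nonoverlapping bumps into $K$ and indexing functions by their $\{0,1\}$-valued supports. Concretely, the first step is to fix once and for all a reference bump $\psi\in C_0^q(\bbR^d)$ with $0\le\psi\le 1$, $\psi(0)=1$, $\supp\psi\subset B(0,1/2)$, and to set $M:=\|\psi\|_{C^q}$, a constant depending only on $d$ and $q$.

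For a scale $h>0$ to be chosen later, I would take a maximal $h$-separated collection of centers $\{x_i\}_{i=1}^N$ inside $B(0,r_0-h/2)$; the rescaled bumps $\psi_i(x):=\psi((x-x_i)/h)$ then have pairwise disjoint supports, each contained in $K$, and a standard volume-packing argument gives $N\ge c_d(r_0/h)^d$ for a purely dimensional constant $c_d>0$. For each $S\subset\{1,\dots,N\}$ I would put
\begin{equation}
f_S(x):=\theta\sum_{i\in S}\psi_i(x).
\end{equation}
Nonnegativity, support in $K$, and $\|f_S\|_\infty=\theta$ are immediate from disjointness of supports, while the standard $C^q$ rescaling gives $\|f_S\|_{C^q}\le\theta M h^{-q}$. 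I would then choose $h:=(M\theta/\beta)^{1/q}$, so that $\|f_S\|_{C^q}\le\beta$; the geometric constraint $h<r_0$ needed to fit the grid is satisfied as soon as $\theta<\mu\beta$ for $\mu$ small enough.

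To finish, for distinct $S\ne S'$ I would pick any $i\in S\triangle S'$ and evaluate at $x_i$: by the disjoint-support property, $|f_S(x_i)-f_{S'}(x_i)|=\theta$, so the family $\{f_S\}$ is $\theta$-distinguishable in $L^\infty$, and its cardinality $2^N$ obeys
\begin{equation}
\log 2^N\ge (\log 2)\,c_d r_0^d M^{-d/q}(\beta/\theta)^{d/q}.
\end{equation}
A final shrinking of $\mu$ so that $\mu^{d/q}\le 2^{d+1}c_d r_0^d M^{-d/q}\log 2$ turns the right-hand side into $2^{-d-1}(\mu\beta/\theta)^{d/q}$, which is the desired lower bound. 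I do not expect any conceptual obstacle; the only subtle point is that $\mu$ plays two roles at once, controlling both the geometric condition $h<r_0$ and the quantitative exponent in the counting bound, so the main work will be bookkeeping and taking $\mu$ as the minimum of the two explicit thresholds arising in those two steps.
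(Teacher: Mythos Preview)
Your construction is correct and is precisely the standard Kolmogorov bump-grid argument. Note, however, that the paper does not actually prove this lemma: it simply states that ``the proof can be easily adapted from'' Kolmogorov's original work on $\eps$-entropy and moves on. Your proposal is exactly that adaptation (essentially the version Mandache uses for the analogous EIT instability lemma), so there is nothing to compare against beyond the citation itself. The only minor bookkeeping point worth tightening is the claim $\|f_S\|_{C^q}\le \theta M h^{-q}$ for a sum of disjointly supported bumps when $q$ is non-integer: the H\"older seminorm across two adjacent supports needs a word, but since each $\psi_i$ extends by zero to a genuine $C^q$ function on $\bbR^d$, this goes through with at worst an extra dimensional constant that can be absorbed into $\mu$.
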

For any bounded linear operator $\cA:\cX\to\cY$, we consider its matrix representation by the entries $a_{mjnk} = \langle \cA Y_{mj} , Y_{nk}\rangle$, then the operator norm of $\cA$ can be bounded by
\begin{equation}\label{eq:NORM}
\begin{aligned}
\|\cA\|^2_{\cX\to \cY} \le &\, \sum_{m,j,n,k}   (1+m)^{-2s}(1+n)^{-2s} |a_{mjnk}|^2.
\end{aligned}
\end{equation}
The $4$-tuple of integers $(m,j,n,k)$ in the above summation runs through all combinations that $m , n \ge 0$, $1\le j \le p_m, 1\le k \le p_n$. 
\begin{lemma}
	$\|\cA\|^2_{\cX\to \cY} \le 32 \sup_{m,j,n,k} (1 + \max(m, n))^{d-2s} |a_{mjnk}|^2$.
\end{lemma}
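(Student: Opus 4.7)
The plan is to start from the bound~\eqref{eq:NORM} already recorded for $\|\cA\|_{\cX\to\cY}^2$ and to factor out the supremum uniformly from every entry. Set
$$M := \sup_{m,j,n,k}(1+\max(m,n))^{d-2s}|a_{mjnk}|^2,$$
so that $|a_{mjnk}|^2 \le M\,(1+\max(m,n))^{2s-d}$ for every 4-tuple $(m,j,n,k)$. Substituting this into~\eqref{eq:NORM} and carrying out the summation over $j,k$ using $p_m\le 2(1+m)^{d-2}$ and $p_n\le 2(1+n)^{d-2}$ gives
$$\|\cA\|_{\cX\to\cY}^2 \le 4M\sum_{m,n\ge 0}(1+m)^{d-2-2s}(1+n)^{d-2-2s}(1+\max(m,n))^{2s-d}.$$

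Next I would exploit the symmetry in $(m,n)$ of the summand: split the sum into $\{m\ge n\}$ and $\{m<n\}$, which gives at most twice the half-sum. On $\{m\ge n\}$ the factor $(1+\max(m,n))^{2s-d}$ becomes $(1+m)^{2s-d}$, and the exponents in $m$ collapse via $(1+m)^{d-2-2s}\cdot(1+m)^{2s-d}=(1+m)^{-2}$. The remaining double sum factorizes:
$$\sum_{m\ge n\ge 0}(1+m)^{-2}(1+n)^{d-2-2s} \le \Bigl(\sum_{m=0}^{\infty}(1+m)^{-2}\Bigr)\Bigl(\sum_{n=0}^{\infty}(1+n)^{d-2-2s}\Bigr).$$

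At this point it only remains to check convergence and extract a uniform numerical constant. The first factor is $\zeta(2)<2$. For the second, the hypothesis $s>\tfrac{d+4}{2}$ (recorded immediately before Theorem~\ref{thm:MAIN}) gives $2s-d+2>6$, so the sum is at most $1+\int_0^\infty(1+x)^{-(2s-d+2)}\,dx = 1+\frac{1}{2s-d+1}<2$. Multiplying the constants yields a bound of $4M\cdot 2\cdot 2\cdot 2 = 32M$, which is exactly the claimed inequality.

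\textbf{Anticipated difficulty.} The argument is essentially bookkeeping on a convergent double sum, so there is no conceptual obstacle. The only delicate point is matching the explicit constant $32$; this is why I would carry out the symmetry split rather than a cruder $\max(m,n)\le m+n$ estimate, since the latter couples $m$ and $n$ and destroys the clean factorization needed to absorb all exponents into $(1+m)^{-2}$ and $(1+n)^{d-2-2s}$. The hypothesis $s>(d+4)/2$ is used exactly once, in guaranteeing the convergence (and tight numerical bound) of the second factor.
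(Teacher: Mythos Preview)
Your argument is correct and follows essentially the same route as the paper: start from~\eqref{eq:NORM}, sum out $j,k$ via $p_m\le 2(1+m)^{d-2}$, split according to $\max(m,n)$, and use that the two resulting series are each bounded by $2$. The only cosmetic difference is that you pull out the supremum $M$ at the outset, whereas the paper keeps $\sup_{l=\max(m,n)}|a_{mjnk}|^2$ inside the sum and extracts the factor $(1+l)^{d-2s}$ algebraically at the end; as a consequence the paper's version needs only $s\ge d/2$ (so that $d-2-2s\le -2$ and $\sum(1+n)^{d-2-2s}<2$), while you invoke the stronger standing assumption $s>(d+4)/2$ for the same bound---harmless here, but unnecessary.
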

\begin{proof}
	We separate the summation into two parts: $m > n$ and $m \le n$. Since the dimension of subscripts $j$ and $k$ are bounded by $2(1+m)^{d-2}$ and $2(1+n)^{d-2}$, we have 
	\begin{equation}\nonumber
	\begin{aligned}
	 \sum_{m,j,n,k}   (1+m)^{-2s}(1+n)^{-2s} |a_{mjnk}|^2 \le  4\sum_{m = 0}^{\infty} (1+m)^{d-2-2s} \sum_{n = 0}^{m-1} (1+n)^{d-2-2s} \sup_{n < m}|a_{mjnk}|^2 \\
	 + 4\sum_{n = 0}^{\infty} (1+n)^{d-2-2s} \sum_{m = 0}^{n} (1+m)^{d-2-2s} \sup_{m \le n}|a_{mjnk}|^2.
	\end{aligned}
	\end{equation}
	The first term stands for the summation of $m > n$, the second term stands for the summation of $m \le n$. The supremes are taken over all $1\le j\le p_m, 1\le k\le p_n$. Because $s \ge \frac{d}{2}$, $d - 2 - 2s\le -2$, the summation $ \sum_{n = 0}^{m-1} (1+n)^{d-2-2s} \le  \sum_{n = 0}^{m-1} (1+n)^{-2} < 2$,
	\begin{equation}
	4\sum_{m = 0}^{\infty} (1+m)^{d-2-2s} \sum_{n = 0}^{m-1} (1+n)^{d-2-2s} \sup_{m > n}|a_{mjnk}|^2 \le 	8\sum_{m = 0}^{\infty} (1+m)^{d-2s-2} \sup_{n < m}|a_{mjnk}|^2.
	\end{equation}
	Similarly, the other term is bounded by
	\begin{equation}
	4\sum_{n = 0}^{\infty} (1+n)^{d-2-2s} \sum_{m = 0}^{n} (1+m)^{d-2-2s} \sup_{m \le n}|a_{mjnk}|^2\le 	8\sum_{n = 0}^{\infty} (1+n)^{d-2s-2} \sup_{m \le n}|a_{mjnk}|^2.
	\end{equation}
	Therefore, we can combine the above two bounds 
	\begin{equation}\label{eq:NORM2}
	\begin{aligned}
		 \|\cA\|^2_{\cX\to \cY}&\le 16 \sum_{l = 0}^{\infty} (1 + l)^{d-2s-2} \sup_{l = \max(m, n)}|a_{mjnk}|^2 \\
		 &\le 16 \left(\sum_{l=0}^{\infty} (1 + l)^{-2} \right) \sup_{l}\left( (1 + l)^{d-2s} \sup_{l=\max(m, n)}|a_{mjnk}|^2\right)\\
		 & \le 32 \sup_{l}\left( (1+l)^{d-2s} \sup_{l = \max(m ,n)} |a_{mjnk}|^2 \right) \\
		 & = 32 \sup_{m,j,n,k} (1 + \max(m, n))^{d-2s} |a_{mjnk}|^2.
	\end{aligned}
	\end{equation}
\end{proof}
Define the following Banach space $X_s$ with $s > \frac{d+4}{2}$ by
\begin{equation}\nonumber
X_s := \{ (a_{mjnk})\, \big| \left\|(a_{mjnk})\right\|_{X_s} := \sup_{m,j,n,k} (1 + \max(m,n))^{d/2 - s} |a_{mjnk}| <\infty \},
\end{equation}
then the estimate~\eqref{eq:NORM2} shows that $\|\cA\|_{X\to Y}\le 4\sqrt{2} \left\|(a_{mjnk})\right\|_{X_s}$.
We define $B_{+, R}^{\infty}$ as,
\begin{equation}
B_{+, R}^{\infty} = \{ f\in L^{\infty}(K) | \|f\|_{\infty} \le R, f\ge 0 \},
\end{equation} 
then we have the following lemma about the $\delta$-net of $\Gamma(B_{+,R}^{\infty})$.
%
	\begin{lemma}\label{lem:NET}
		The operator $\Gamma$ maps $B_{+,R}^{\infty}$ into $X_s$ for any $s > \frac{d+4}{2}$. There exists a constant $0 < \eta = \eta(s, d)$ that for every $\delta \in (0, e^{-1})$, there is a $\delta$-net $Y$ for $\Gamma(B_{+,R}^{\infty})$ in $X_s$, with at most $\exp\left( \eta \left(\log \delta^{-1} + \frac{\eps}{\delta}+ \left(\frac{\eps}{\delta}\right)^{-1/\tau}\right)^{2d+1}\right)$ elements, where $\tau = \frac{d+4}{2} -s < 0$.
	\end{lemma}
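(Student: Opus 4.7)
The plan is to realize every element of $\Gamma(B_{+,R}^{\infty})$ by its matrix $(a_{mjnk})$ from Lemma~\ref{lem:ESTIMATE}, truncate it to a finite block, and quantize the surviving entries on a weighted grid. Setting $l=\max(m,n)$ and using the identity $d/2-s+2=\tau$, Lemma~\ref{lem:ESTIMATE} rewrites as
\begin{equation*}
(1+l)^{d/2-s}|a_{mjnk}|\le C_0R\bigl[(1+l)^{\tau-1}r_0^l+\eps(1+l)^{\tau}\bigr].
\end{equation*}
Since $\tau<0$ and $r_0<1$ the right-hand side is bounded uniformly in $l$, which already shows $\Gamma(B_{+,R}^{\infty})\subset X_s$.

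I would next choose a truncation level $N=N(\delta,\eps)$ so that both $(1+N)^{\tau-1}r_0^{N}\le\tfrac{\delta}{4C_0R}$ and $\eps(1+N)^{\tau}\le\tfrac{\delta}{4C_0R}$. The first condition rests on the exponential decay of $r_0^l$ and forces $N\gtrsim\log\delta^{-1}$; the second rests on the polynomial decay $(1+l)^{\tau}$ with $\tau<0$ and is equivalent to $1+N\ge(4C_0R\,\eps/\delta)^{-1/\tau}$. Hence $N\le c_1\bigl(\log\delta^{-1}+(\eps/\delta)^{-1/\tau}\bigr)$. Zeroing out every entry with $l>N$ produces, for each $\sigma_a\in B_{+,R}^{\infty}$, a weighted tail error at most $\delta/2$ by the very choice of $N$. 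For the finitely many entries with $l\le N$, I would round the real and imaginary parts of $a_{mjnk}$ independently to the nearest multiple of $\delta(1+l)^{s-d/2}/4$, inside the disk of radius $C_0R(1+l)(r_0^l+\eps(1+l))$; the weighted rounding error is then at most $\delta/2$, and the two errors combined yield the desired $\delta$-approximation in $X_s$.

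Finally, I would count. The number of lattice points needed per entry is
\begin{equation*}
M_{mjnk}\le\Bigl(1+\tfrac{8C_0R}{\delta}\bigl[(1+l)^{\tau-1}r_0^l+\eps(1+l)^{\tau}\bigr]\Bigr)^{2}\le\bigl(C'/\delta\bigr)^{2},
\end{equation*}
uniformly in $l\le N$, since the bracket is bounded by $1+\eps\le 2$. Using $p_l\le 2(1+l)^{d-2}$, the number of 4-tuples with $\max(m,n)\le N$ is at most $4\bigl(\sum_{l=0}^{N}p_l\bigr)^{2}\le c_2 N^{2d-2}$. Multiplying,
\begin{equation*}
\log|Y|\le 2c_2 N^{2d-2}\log(C'/\delta)\le\eta\bigl(\log\delta^{-1}+\eps/\delta+(\eps/\delta)^{-1/\tau}\bigr)^{2d+1},
\end{equation*}
after using $N^{2d-2}\log\delta^{-1}\lesssim(\log\delta^{-1}+(\eps/\delta)^{-1/\tau})^{2d-1}$ and noting each term in the inner sum is $\ge 1$ for $\delta<e^{-1}$. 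The main technical subtlety is the simultaneous balancing of the two tail decay mechanisms: the ballistic part decays exponentially like $r_0^l$ while the diffusive part decays only polynomially like $\eps(1+l)^{\tau}$, and they impose two qualitatively different lower bounds on $N$ that produce the $\log\delta^{-1}$ and $(\eps/\delta)^{-1/\tau}$ terms respectively in the final entropy bound.
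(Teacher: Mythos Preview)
Your proof is correct and follows the same truncate--quantize--count scheme as the paper, but with a genuine simplification in the quantization step. The paper exploits the decomposition $a_{mjnk}=M_{1,mjnk}+M_{2,mjnk}$ from the proof of Lemma~\ref{lem:ESTIMATE} and builds two separate \emph{unweighted} grids $Y_{1,\delta s}=\delta'\bbZ\cap[-C_0C_0'R,\,C_0C_0'R]$ and $Y_{2,\delta s}=\delta'\bbZ\cap[-C_0R\eps(1+l_{\delta s})^2,\,C_0R\eps(1+l_{\delta s})^2]$, then sets $Y=Y_1+Y_2$; the range of $M_{2}$ grows like $\eps(1+l_{\delta s})^2$, which is where the extra $\eps/\delta$ term in the cardinality bound originates. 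You instead quantize $a_{mjnk}$ directly on a single \emph{weighted} grid of spacing $\delta(1+l)^{s-d/2}/4$, so that the ratio of range to spacing is $\frac{4C_0R}{\delta}\bigl[(1+l)^{\tau-1}r_0^l+\eps(1+l)^{\tau}\bigr]\le\frac{8C_0R}{\delta}$ uniformly in $l$, and no $\eps/\delta$ term ever appears in the count. Your argument therefore actually yields the sharper entropy bound $\log|Y|\lesssim\bigl(\log\delta^{-1}+(\eps/\delta)^{-1/\tau}\bigr)^{2d-1}$, which of course implies the lemma as stated. One small correction: your remark that ``each term in the inner sum is $\ge1$'' is not literally true for $(\eps/\delta)^{-1/\tau}$ when $\eps<\delta$, but your chain of inequalities only needs $\log\delta^{-1}\ge1$, which holds for $\delta<e^{-1}$, so the conclusion is unaffected.
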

	\begin{proof}
		For any $s > \frac{d+4}{2}$ and $\sigma_a\in B^{\infty}_{+, R}$, using Lemma~\ref{lem:ESTIMATE}, we have
		\begin{equation}
		\begin{aligned}
		\sup_{m,j,n,k}(1+\max(m,n))^{d/2-s}|(M_{1,mjnk})| &\le  C_0 R \sup_{l} (1+l)^{d/2-s + 1}  r_0^{l}  < \infty,\\
		\sup_{m,j,n,k}(1+\max(m,n))^{d/2-s}|(M_{2,mjnk})|&\le C_0 R \eps \sup_{l} (1 + l)^{d/2 - s + 2} \le C_0 R\eps < \infty.
		\end{aligned}
		\end{equation}
	then we have 
		$\big\|\Gamma(\sigma_a)\big\|_{X_s} <\infty,$
		which implies that $\Gamma(B_{+, R}^{\infty}) \subset X_s$. Denote $\tau = \frac{d+4}{2} -s < 0$ and let $l_{\delta s}$ be the smallest integer such that $\forall l\ge l_{\delta s}$ 
		\begin{equation}
		(1 + l)^{\tau} \left(r_0^{l} + \eps \right) \le \frac{\delta}{4C_0 R}.
		\end{equation}
		We conclude that
		\begin{equation}
		l_{\delta s} \le \max(l_1, l_2) \le l_1 + l_2,
		\end{equation}
		where $l_1$ and $l_2$ are the solutions to following equations.
		\begin{equation}
		(1 + l_1)^{\tau} r_0^{l_1} = \frac{\delta}{8C_0R},\quad (1+l_2)^{\tau}\eps = \frac{\delta}{8C_0R}.
		\end{equation}
		It is easy to deduce that $l_1 \le \log(\frac{\delta}{8C_0 R})/\log r_0$ and $l_2 = \left(\frac{8C_0 R\eps}{\delta}\right)^{-1/\tau}-1$, then there exists an absolute constant $\hat{C}$ that
		\begin{equation}\label{eq:TRUNC_ESTIMATE}
		1 + l_{\delta s} \le \hat{C}\left(\log \delta^{-1} + \left(\frac{\eps}{\delta}\right)^{-1/\tau}\right).
		\end{equation}
		For all $4$-tuples $(m,j,n,k)$ such that $\max(m,n)\le l_{\delta s}$, the upper bounds are given by Lemma~\ref{lem:ESTIMATE} that $|M_{1,mjnk}|\le C_0 C_0' R$ and $|M_{2,mjnk}| \le C_0 R\eps (1 + l_{\delta s})^2$, where $C_0'=\sup_{l} (1+l)r_0^l$. We denote $\delta' = \frac{\delta}{8\sqrt{2}}$ and consider the sets
		\begin{equation}
		\begin{aligned}
		Y_{1, \delta s}& := \delta'\bbZ \cap [-C_0C_0' R, C_0 C_0' R], \\ Y_{2,\delta s}& := \delta'\bbZ \cap [-C_0 R\eps (1 + l_{\delta s})^2, C_0 R\eps(1 + l_{\delta s})^2].
		\end{aligned}
		\end{equation}
		Then $|Y_{1,\delta s} |= 1 + 2\lfloor\frac{C_0 C_0'R}{\delta'} \rfloor\le 1 + \frac{16\sqrt{2} C_0C_0'R}{\delta}$ and $|Y_{2,\delta s}| = 1 + 2\lfloor \frac{C_0R\eps (1 + l_{\delta s})^2}{\delta'}\rfloor \le 1 + \frac{16\sqrt{2} C_0R\eps(1 + l_{\delta s})^2}{\delta}$. Define the following two sets
		\begin{equation}
		\begin{aligned}
		Y_1 &= \{ (b_{mjnk}) \,|\, b_{mjnk}\in Y_{1,\delta s} \text{ for } \max(m,n) \le l_{\delta s}, b_{mjnk} = 0\text{ otherwise} \}, \\
		Y_2 &= \{ (c_{mjnk}) \,|\, c_{mjnk}\in Y_{2,\delta s} \text{ for } \max(m,n) \le l_{\delta s}, c_{mjnk} = 0 \text{ otherwise}  \}.
		\end{aligned}
		\end{equation}
		We briefly prove $Y = Y_1 + Y_2$ is a $\delta$-net for $\Gamma(B_{+, R}^{\infty})$. Let $(a_{mjnk})\in \Gamma(B_{+, R}^{\infty})$,  we construct two elements $(b_{mjnk})\in Y_1$ and $(c_{mjnk})\in Y_2$ as approximations to $M_{1,mjnk}$ and $M_{2,mjnk}$. If $\max(m, n) \le l_{\delta s}$, we take $b'_{mjnk}$ to be one of the elements in $Y_{1,\delta s}$ closest to $M_{1,mjnk}$ and $c'_{mjnk}$ to be one of the  elements in $Y_{2,\delta s}$ closest to $M_{2,mjnk}$, then $|b'_{mjnk} - M_{1,mjnk}|\le \delta'$ and $|c'_{mjnk} - M_{2, mjnk}|\le \delta'$. If $\max(m, n) > l_{\delta s}$, then $b'_{mjnk} = c'_{mjnk} = 0$. Since $s > d/2$, $(1 + \max(m,n))^{d/2-s}\le 1$, then it is easy to obtain
		\begin{equation}\nonumber
		\begin{aligned}
		4 \sqrt{2}(1 + \max(m, n))^{d/2-s}(|b'_{mjnk} - M_{1,mjnk}| + |c'_{mjnk} - M_{2,mjnk}|) \le 8 \sqrt{2}\delta' = \delta.
		\end{aligned}
		\end{equation}
		For $\max(m, n) > l_{\delta s}$, above inequality also holds by the construction of $l_{\delta s}$. Therefore $Y$ is a $\delta$-net for $\Gamma(B^{\infty}_{+, R})$. 
		
		Next, we count the elements in $Y$. Since $|Y| = |Y_1| |Y_2|$, it remains to estimate the cardinalities for both $Y_1$ and $Y_2$. Let $n_l$ be the number of $4$-tuples $(m,j,n,k)$ such that $\max(m, n) = l$, then $Y_1$ has $|Y_{1,\delta s}|^{n_{\delta s}}$ elements and $Y_2$ has $|Y_{2, \delta s}|^{n_{\delta s}}$ elements, where $n_{\delta s} = \sum_{j = 0}^{l_{\delta s}} n_j \le 8(1+l_{\delta s})^{2d - 2}$. Using the fact that $\log(1+t)\le t$ for $t\ge 0$ and $\log\delta^{-1}\ge 1$, we can estimate $|Y|$ by applying~\eqref{eq:TRUNC_ESTIMATE} and taking $\eta$ sufficiently large,
		\begin{equation}
		\begin{aligned}
		|Y| &\le \left(\left(1 + \frac{16\sqrt{2}C_0 C_0'R}{\delta}\right) \left(1 + \frac{16\sqrt{2}C_0 R\eps (1 + l_{\delta s})^2}{\delta}\right)\right)^{8(1 + l_{\delta s})^{2d-2}}\\
		&\le \exp\left(\left( C_1 \log(\delta^{-1})+ \log\left(1 + \frac{16\sqrt{2}C_0 R\eps (1 + l_{\delta s})^2}{\delta}\right) \right)8(1 + l_{\delta s})^{2d-2}\right)\\
		& \le \exp\left( C_2 \left(\log(\delta^{-1}) +  \frac{\eps}{\delta}\right) (1 + l_{\delta s})^{2d}\right) \\
		&\le \exp\left( C_3 \left(\log(\delta^{-1}) +  \frac{\eps}{\delta}\right)\left(\log \delta^{-1} + \left(\frac{\eps}{\delta}\right)^{-1/\tau}\right)^{2d}\right)\\
		& \le \exp\left( \eta \left(\log \delta^{-1} + \frac{\eps}{\delta}+ \left(\frac{\eps}{\delta}\right)^{-1/\tau}\right)^{2d+1}\right).
		\end{aligned}
		\end{equation}
	\end{proof}
	\begin{proof}[Proof of theorem~\ref{thm:MAIN}]
		Let $\theta\in (0, \frac{R}{2})$, by Lemma~\ref{lem:KOLMO}, $\sigma_{a, 0} + X_{q\theta\beta}$ has a $\theta$-distinguishable set $\sigma_{a,0} + Z\subset \sigma_{a,0} + X_{q\theta\beta}$, then any two elements in $\sigma_{a, 0} + Z$ are separated by a distance of at least $\theta$ in $L^{\infty}$ norm and $\sigma_{a, 0} + X_{q\theta\beta} \subset B_{+,R}^{\infty}$. By Lemma~\ref{lem:NET}, the constructed  set $Y$ is a $\delta$-net for $\Gamma(\sigma_{a,0} + X_{q\theta\beta})$. When $|\sigma_{a, 0} + X_{q\theta\beta}| > |Y|$, then there are two absorption coefficients $\sigma_{a, 1}, \sigma_{a,2}\in \sigma_{a,0}+X_{q\theta\beta}$, their images under $\Gamma$ are in the same $X_s$-ball of radius $\delta$ centered at some element in $Y$, then use~\eqref{eq:NORM2}, we obtain
		\begin{equation}\label{eq:BOUND}
		\|\Gamma(\sigma_{a,1}) -\Gamma(\sigma_{a,2}) \|_{\cX\to\cY} \le 8\sqrt{2}\delta.
		\end{equation}
		We take $\delta$ be the unique solution to the following equation
		\begin{equation}
		\theta^{-\frac{d}{(2d+1)q}} = \log \delta^{-1} + \left(\frac{\eps}{\delta}\right) +\left( \frac{\eps}{\delta}\right)^{-1/\tau},\quad \tau = \frac{d+4}{2} -s < 0,
		\end{equation}
		and choose $\beta$ that
		\begin{equation}
		\beta > \mu^{-1}\max\left(\frac{R}{2}, \left(2^{(d+1)}\eta\right)^{q/d}\right),
		\end{equation}
		then $\mu\beta \ge \frac{R}{2} > \theta$ satisfies the requirement for Lemma~\ref{lem:KOLMO}, which implies 
		\begin{equation}
		|\sigma_{a,0} + X_{q\theta\beta}| = |X_{q\theta\beta}| \ge \exp(2^{-d-1}(\mu\beta/\theta)^{d/q}) > \exp(\eta \theta^{-d/q}) \ge |Y|.
		\end{equation}
	\end{proof}

\begin{proof}[Proof of Corollary~\ref{cor:EXP} and Corollary~\ref{cor:HOLDER}]
	Under the assumption that $s - \frac{d+4}{2} = 1$, it is clear that $\delta$  solves equation
	\begin{equation}
\theta^{-\frac{d}{(2d+1)q}}= \log\delta^{-1} + 2\left( \frac{\eps}{\delta}\right).
	\end{equation}
	which is exactly the function $\omega$ in Theorem~\ref{thm:MAIN}. We consider two cases. The first case, $\eps$ is sufficiently small,
	\begin{equation}
	\begin{aligned}
		&\left( \frac{\eps}{\delta}\right) \le \log\delta^{-1} \Rightarrow \eps \le \delta \log\delta^{-1}, \\
		&\theta^{-\frac{d}{(2d+1)q}} \le 3\log\delta^{-1} \Rightarrow 	\delta \le \exp\left( -\frac{1}{3} \theta^{-\frac{d}{(2d+1)q}}\right).
	\end{aligned}
	\end{equation}
The second case, $\eps$ is not sufficiently small, we have
	\begin{equation}
	\begin{aligned}
	&\left( \frac{\eps}{\delta}\right) \ge \log\delta^{-1} \Rightarrow \eps \ge \delta \log\delta^{-1}, \\
	&\theta^{-\frac{d}{(2d+1)q}}\le 3\left(\frac{\eps}{\delta}\right) \Rightarrow \delta \le 3\eps \theta^{\frac{d}{(2d+1)q}}.
	\end{aligned}
	\end{equation}
	The rest of proof is straightforward by applying Theorem~\ref{thm:MAIN}.
\end{proof}
\begin{remark}
	We also want to point out that the H\"{o}lder type instability estimate in~\cite{chen2018stability} is similar to our Corollary~\ref{cor:HOLDER} as special cases. For the special case $s - \frac{d+4}{2} = 1$ and a fixed perturbation $\delta$ in the measurement operator $\Lambda$, if the Knudsen number $\eps$ is large compared to the perturbation in the sense $\eps \gg  O(\delta \log \delta^{-1})$, the contribution of diffusion approximation $M_{1,mjnk}$ is dominated by that of transport $M_{2,mjnk}$. {Especially if $\eps = \cO(1)$, although the Hilbert expansion is no longer valid, the estimate of $M_{2,mjnk}$ is still true from standard transport theory, which will always dominate the $M_{1,mjnk}$.} The authors in~\cite{chen2018stability} linearize the transport equation which takes into account the transport term $M_{2,mjnk}$ while neglecting the diffusion term $M_{1,mjnk}$. Hence the instability estimate only contains H\"{o}lder type. However, when $\eps$ is sufficiently small such that $\eps \ll O(\delta \log\delta^{-1})$, then $M_{1,mjnk}$ becomes dominant and therefore can not be simply dropped.
\end{remark}
\section{Conclusion}\label{sec:CONCL}
In this paper we study the instability of {reconstruction of $\sigma_a$} in radiative transfer equation~\eqref{eq:RTE} with angularly averaged albedo operator measurement near diffusion limit $0 < \eps \ll 1$. When $\eps \to 0$, the problem degenerates to the inverse problem of a diffusion equation which is equivalent to the EIT problem. Our instability estimate characterizes the transition of instability from the balance of diffusion and transport. When $\eps$ is away from zero and perturbation in measurement $\delta$ is small enough, H\"{o}lder type of instability is observed, which is still ill-posed unless sufficient regularity of $\sigma_a$ is imposed.  Otherwise the exponential type of instability is observed, which is similar to the EIT problem. 
\section*{Acknowledgment}
H. Zhao is partially supported by NSF Grant DMS 1622490 and DMS 1821010. 

\appendix
\section{Appendix}
In this appendix, we briefly prove the following two key lemmas used in the proof of Theorem \ref{thm:MAIN}.
\begin{lemma}\label{lem:A1}
The integral operators $\cK_1$ and $\cK_2$ defined in~\eqref{eq:Ks}, satisfy
\begin{equation}
\|(I - \cK_1)^{-1}\cK_2 f\|_{L^2(D)} < C \|f\|_{H^{3/2}(\partial D)}
\end{equation} 
where $0 < \eps < 1$ and $C$ is a constant independent of $\eps$.
\end{lemma}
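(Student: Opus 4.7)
The key observation is that the Peierls integral equation identifies $(I-\cK_1)^{-1}\cK_2 f$ with the angular average $\aver{u_0}$ of the solution $u_0$ to \eqref{eq:BASE} with incoming data $f$; this is obtained by integrating \eqref{eq:BASE} along characteristics and averaging over $v \in \Omega$. Hence the lemma is equivalent to the uniform-in-$\eps$ estimate $\|\aver{u_0}\|_{L^2(D)} \le C \|f\|_{H^{3/2}(\partial D)}$.

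My plan is to leverage the diffusion approximation in the same spirit as in the proof of Lemma~\ref{lem:ESTIMATE}. First I would decompose $u_0 = u_0^I + u_0^B$ into an interior component and a boundary-layer component. The interior component admits the asymptotic form $u_0^I(x,v) = U_0(x) - \frac{\eps}{\sigma_s} v \cdot \nabla U_0(x) + \eps^2 R_0(x,v)$, where $U_0$ solves the Dirichlet problem $\Delta U_0 = 0$ in $D$ with matched boundary data $f$ (up to an $O(\eps)$ boundary-layer correction), exploiting that $\sigma_s$ is constant. Second, elliptic regularity provides $\|U_0\|_{H^2(D)} \le C\|f\|_{H^{3/2}(\partial D)}$, and since angular averaging annihilates the odd linear term in $v$, we obtain $\aver{u_0^I} = U_0 + \eps^2 \aver{R_0}$. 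Third, a standard $L^2$ source-to-solution estimate for the RTE (uniform in $\eps$) yields $\|R_0\|_{L^2(D\times\Omega)} \le C\|U_0\|_{H^2(D)}$, which controls the interior contribution.

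The boundary layer $u_0^B$ decays exponentially on the $O(\eps)$ scale; the classical half-space Milne-problem analysis combined with the $O(\eps)$ volume of the layer yields $\|u_0^B\|_{L^2(D\times\Omega)} \le C \sqrt{\eps}\, \|f\|_{L^2(\partial D)} \le C\|f\|_{H^{3/2}(\partial D)}$, where the second inequality is the trivial Sobolev embedding. Summing the interior and boundary-layer estimates then produces the desired bound on $\aver{u_0}$.

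The main obstacle is ensuring that every constant in the expansion is genuinely independent of $\eps$: the source driving the remainder $R_0$ contains $\eps\, v \cdot \nabla U_0$ in the interior, and matching this profile to the layer creates a tangential mismatch at $\partial D$, both of which require control of $U_0$ in $H^2(D)$. This is exactly what forces the $H^{3/2}$ regularity threshold on $f$. Once the boundary-layer construction is carried out carefully in this constant-$\sigma_s$ geometry (following the standard framework of Bensoussan--Lions--Papanicolaou), the $\eps$-uniform combination of the interior and boundary-layer estimates gives the claim.
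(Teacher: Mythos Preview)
Your approach is sound and takes a genuinely different route from the paper's. The paper works directly with the integral operators: it views $\cK_1$ as a convolution, computes its Fourier symbol explicitly to obtain $\|\cK_1\|_{L^2\to L^2}\le 1-O(\eps^2)$ and hence $\|(I-\cK_1)^{-1}\|=O(\eps^{-2})$, and then rewrites $\cK_2 f$ through an antiderivative $G$ of the kernel and two integrations by parts as $(I-\cK_1)\tilde f - G\ast\Delta\tilde f$ for an $H^2$ extension $\tilde f$ with $\partial_n\tilde f=0$ on $\partial D$. Since $\|G\|_{L^2\to L^2}=O(\eps^2)$, the $\eps$-powers cancel and one is left with $\|\tilde f\|_{L^2}+\|\Delta\tilde f\|_{L^2}\lesssim\|f\|_{H^{3/2}(\partial D)}$. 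You instead pass through the PDE identification $(I-\cK_1)^{-1}\cK_2 f=\aver{u_0}$ and estimate $\aver{u_0}$ via the Hilbert expansion plus boundary layer, reusing precisely the machinery that the paper itself deploys later for the $M_{2,mjnk}$ term. The paper's route is more self-contained (no black-box appeal to BLP remainder bounds), while yours is conceptually more unified with the rest of the argument and makes the $H^{3/2}$ threshold appear naturally as the elliptic regularity requirement $U_0\in H^2(D)$. One point to make explicit: the uniform-in-$\eps$ bound $\|R_0\|_{L^2}\le C\|U_0\|_{H^2}$ you invoke must be established independently of Peierls-type estimates, since otherwise the argument is circular (in the paper's logic Lemma~\ref{lem:A2} depends on Lemma~\ref{lem:A1}); the energy-method proofs in the Bensoussan--Lions--Papanicolaou framework do provide this independently, but you should state which estimate you are citing.
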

\begin{proof}
		We only have to prove for the case $\sigma_a \equiv 0$.  First, we estimate the norm of $(I - \cK_1)^{-1}$. The operator $\cK_1$ is defined as
		\begin{equation}
		\cK_1 f(x) = \frac{1}{\nu_d}\int_{D} \frac{\exp(-\frac{\sigma_s}{\eps}|x-y|)}{|x-y|^{d-1}}\frac{\sigma_s}{\eps}f(y)dy
		\end{equation}
		where $\sigma_s$ is a positive constant, $\nu_d$ is the area of the unit sphere in $\mathbb{R}^d$, and  $D$ is the unit ball in $\mathbb{R}^d$.  With a slight abuse of notation, we consider $f(x)$ as $f(x)\chi_{D}$, where $\chi_D$ is the characteristic function of $D$. Then
		\begin{equation}
		\cK_1 f = k\ast f.
		\end{equation}
		Take Fourier transform,
		\begin{equation}
		\widehat{\cK_1 f}(\xi) = \widehat{k}(\xi) \widehat{f}(\xi),
		\end{equation}
		where $k(x)$ is defined as
		\begin{equation}
		k(x) = \frac{1}{\nu_d}\frac{\exp(-\frac{\sigma_s}{\eps}|x|)}{|x|^{d-1}} \frac{\sigma_s}{\eps}.
		\end{equation}
		The Fourier transform of $k(x)$ is 
		\begin{equation}
		\begin{aligned}
		\widehat{k}(\xi)&=\frac{\nu_{d-1}}{\nu_d} \int_{0}^{\infty}\int_0^{\pi} \exp(-\frac{\sigma_s}{\eps}\rho)\frac{\sigma_s}{\eps} \exp(-i|\xi| \rho \cos\theta) d\rho \sin^{d-2} \theta d\theta \\ &= \frac{\nu_{d-1}}{\nu_d} \int_0^{\pi}\frac{\sin^{d-2}\theta d\theta}{1 + \eps^2 \frac{|\xi|^2}{\sigma_s^2}\cos^2\theta } .
		\end{aligned}
		\end{equation}
		It is easy to see that $|\widehat{k}(\xi)|<1$ and is a decreasing function of $|\xi|$. On the other hand, since $f$ is compactly supported in $D$, 
		\begin{equation}
		|\widehat{f}(\xi)| = |\int_{D} \exp({-ix\cdot\xi}) f(x) dx|\le \int_{D} |f(x)| dx \le C_D \|f\|_{L^2}
		\end{equation}
		for some $C_D$ depending on $D$ only. Then there exists an absolute constant $s > 0$ that
		\begin{equation}
		\int_{|\xi| < s} |\widehat{f}(\xi)|^2 d\xi \le \frac{1}{2}\|f\|_{L^2}^2.
		\end{equation}
		For $|\xi|\le s$ and $0 < \eps < 1$, we have
		\begin{equation}
		\begin{aligned}
	\widehat{k}(\xi) &=\frac{2 \nu_{d-1}}{\nu_d}\left( \int_0^{\hat{\theta}}\frac{\sin^{d-2}\theta d\theta}{1 + \eps^2 \frac{|\xi|^2}{\sigma_s^2}\cos^2\theta }+
	\int_{\hat{\theta}}^{\frac{\pi}{2}}\frac{\sin^{d-2}\theta d\theta}{1 + \eps^2 \frac{|\xi|^2}{\sigma_s^2}\cos^2\theta }\right) \\ &\le \frac{1}{2}\frac{1}{1 + \eps^2 \frac{|\xi|^2}{\sigma_s^2}\cos^2\hat{\theta }}+\frac{1}{2}
	\le \frac{1}{1 + C(\eps^2 \frac{|\xi|^2}{\sigma_s^2})},
		\end{aligned}
		\end{equation}
		where $C$ is positive constant independent of $\eps$ and $\hat{\theta}$ satisfies
		\[
		\frac{\nu_{d-1}}{\nu_d} \int_0^{\hat{\theta}}\sin^{d-2}\theta d\theta =\frac{\nu_{d-1}}{\nu_d} \int_{\hat{\theta}}^{\frac{\pi}{2}} 
		\sin^{d-2}\theta d\theta =\frac{1}{4}.
		\] 		
		Then we obtain the following estimate 
		\begin{equation}\nonumber
		\begin{aligned}
		\|\cK_1 f\|^2_{L^2} &= \|\widehat{\cK_1 f}\|^2_{L^2} = \|\widehat{k}\widehat{f}\|^2_{L^2} \le 
		\int_{|\xi| < s} |\widehat{k}|^2 |\widehat{f}|^2 + \int_{|\xi| \ge s} |\widehat{k}|^2 |\widehat{f}|^2  \\ &\le
		\int_{|\xi| < s} |\widehat{f}|^2  + \frac{1 }{1 + 2C\epsilon^2 \frac{s^2}{\sigma_s^2}}\int_{|\xi| \ge s}  |\widehat{f}|^2 \le \frac{1 + C\epsilon^2 \frac{s^2}{\sigma_s^2}}{1 + 2 C \epsilon^2\frac{s^2}{\sigma_s^2}} \|f\|_{L^2}^2 = (1 - O(\epsilon^2)) \|f\|^2_{L^2}.
		\end{aligned}
		\end{equation}
		Therefore $\|(I - \cK_1)^{-1}\|_{L^2(D)\to L^2(D)} \le O(\eps^{-2})$. In the next, we compute $\cK_2 f$. Denote $$
		G(r) = -\frac{1}{\nu_d}\int_{r}^{\infty} \frac{e^{-\frac{\sigma_s}{\eps}\rho}}{\rho^{d-1}} d\rho,$$
		then we can easily verify that
		\begin{equation}
		\begin{aligned}
		\cK_2 f &= \int_{\partial D} \partial_{\bn} G(|x - y|) f(y) dS(y) \\
		&= \int_{D} \nabla \cdot (\nabla G(|x - y|) \tilde{f}(y)) dy \\
		&= (I - \cK_1) \tilde{f} + \int_{D} \nabla G(|x-y|)\cdot \nabla \tilde{f}(y) dy \\
		&= (I -\cK_1) \tilde{f} - \int_{D} G(|x - y|)\Delta \tilde{f}(y)dy + \int_{\partial D} G(|x-y|) \frac{\partial \tilde{f}}{\partial n} dS(y),
		\end{aligned}
		\end{equation}
		where $\tilde{f}$ is an extension of $f$ in $D$ such that $ \tilde{f}|_{\partial D} = f,  \partial_{n}\tilde{f}|_{\partial D} = 0$ and $\|\tilde{f}\|_{L^2(D)} + \|\Delta \tilde{f}\|_{L^2(D)} \le \tilde{C} \|f\|_{H^{3/2}(\partial D)}$. Such an $\tilde{f}$ in the unit ball can be explicitly constructed using spherical harmonics. 
		Since $\|G\|_{L^2(D)\to L^2(D)} = O(\eps^2)$, we have 
		\begin{equation}\nonumber
		\|(I - \cK_1)^{-1}\cK_2 f\|_{L^2(D)} = \|\tilde{f} - (I-\cK_1)^{-1} G\ast\Delta\tilde{f}\|_{L^2(D)}
		\le \hat{C} (\|\tilde{f}\|_{L^2(D)} + \|\Delta \tilde{f}\|_{L^2(D)}) \le C \|f\|_{H^{3/2}(\partial D)}.
		\end{equation}
\end{proof}
\begin{lemma}\label{lem:A2}
	Suppose $u$ is the solution to~\eqref{eq:RTE}, then
	\begin{equation}
	\|u\|_{L^2(D\times\Omega)} \le C \|f\|_{H^{3/2}(\partial D)}
	\end{equation}
	where $0 < \eps < 1$ and $C$ is independent of $\eps$.
\end{lemma}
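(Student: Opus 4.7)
The plan is to decompose $u = u_b + u_\phi$, where $u_b$ is the ballistic (collisionless) part carrying the boundary data and $u_\phi = u - u_b$ is the scattered remainder with zero incoming data. Specifically, $u_b$ solves
\begin{equation*}
v\cdot\nabla u_b + \left(\eps\sigma_a + \frac{\sigma_s}{\eps}\right) u_b = 0, \quad u_b|_{\Gamma_-} = f,
\end{equation*}
while $u_\phi$ solves the RTE with zero incoming boundary data and source $\sigma_s\aver{u}/\eps$. Each piece will be estimated separately, and care must be taken that the constants do not blow up as $\eps \to 0$.

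For the ballistic part, the explicit characteristic formula $u_b(x,v) = f(x - \tau_-(x,v)v) \exp\bigl(-\int_0^{\tau_-(x,v)}(\eps\sigma_a + \sigma_s/\eps)\,dt\bigr)$ combined with the change of coordinates $x = y + sv$, $y\in\partial D$, $v\cdot n_y<0$, $s\in(0,\tau_+(y,v))$ (Jacobian $|v\cdot n_y|\,dS(y)\,ds$) yields, after integrating out the exponential damping,
\begin{equation*}
\|u_b\|_{L^2(D\times\Omega)}^2 \le \int_\Omega\int_{v\cdot n_y < 0} |v\cdot n_y|\, |f(y)|^2 \int_0^{\tau_+(y,v)} e^{-2s\sigma_s/\eps}\,ds\,dS(y)\,d\mu(v) \le C\eps\,\|f\|_{L^2(\partial D)}^2.
\end{equation*}
For $u_\phi$, I would multiply the equation by $u_\phi$, integrate over $D\times\Omega$, and use $u_\phi|_{\Gamma_-}=0$ together with $v\cdot n_x u_\phi^2 \ge 0$ on $\Gamma_+$ to get
\begin{equation*}
\frac{\sigma_s}{\eps}\|u_\phi\|^2_{L^2(D\times\Omega)} \le \frac{\sigma_s}{\eps}\int_D \aver{u}\,\aver{u_\phi}\,dx \le \frac{\sigma_s}{\eps}\|\aver{u}\|_{L^2(D)}\|u_\phi\|_{L^2(D\times\Omega)},
\end{equation*}
where Jensen's inequality (valid since $d\mu$ is a probability measure) gives $\|\aver{u_\phi}\|_{L^2(D)} \le \|u_\phi\|_{L^2(D\times\Omega)}$. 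Hence $\|u_\phi\|_{L^2(D\times\Omega)} \le \|\aver{u}\|_{L^2(D)}$, with a constant independent of $\eps$.

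It remains to bound $\|\aver{u}\|_{L^2(D)}$. The angular average satisfies the Peierls integral equation $\aver{u} = \cK_1\aver{u} + \cK_2 f$ used already in the proof of Lemma~\ref{lem:ESTIMATE}, so Lemma~\ref{lem:A1} immediately yields $\|\aver{u}\|_{L^2(D)} = \|(I-\cK_1)^{-1}\cK_2 f\|_{L^2(D)} \le C\|f\|_{H^{3/2}(\partial D)}$ uniformly in $\eps$. Combining the three estimates gives the desired bound $\|u\|_{L^2(D\times\Omega)} \le C\|f\|_{H^{3/2}(\partial D)}$. The main obstacle is ensuring uniform $\eps$-independence at this last step: Lemma~\ref{lem:A1} is stated and proved for $\sigma_a \equiv 0$, but the additional factor $\exp(-\eps|x-y|\int_0^1 \sigma_a\,dt)$ in the full kernel only enhances decay since $\sigma_a \ge 0$, so the Neumann-series analysis carries over by pointwise kernel domination. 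Alternatively, one can write $\aver{u}$ as the $\sigma_a \equiv 0$ solution plus a correction controlled in $L^2$ by $O(\eps^2)\|\sigma_a\|_\infty \|\aver{u}\|_{L^2(D)}$, which is absorbable for $\eps \ll 1$ and reduces everything to Lemma~\ref{lem:A1} verbatim.
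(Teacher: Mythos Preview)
Your proposal is correct and follows the same overall architecture as the paper: write $u$ as a ballistic piece carrying the boundary data plus a scattered piece driven by the source $\frac{\sigma_s}{\eps}\aver{u}$, bound each separately, and close by invoking Lemma~\ref{lem:A1} for $\|\aver{u}\|_{L^2(D)}$. The difference lies in the tools used for the two pieces. The paper writes the pointwise bound $|u|\le T|\aver{u}| + S|f|$ and then estimates the operators $T$ and $S$ via Riesz--Thorin interpolation between $L^1$ and $L^\infty$; you instead handle the ballistic part by a direct change of variables $(x,v)\mapsto(y,s,v)$ (yielding the sharper bound $\|u_b\|_{L^2}\le C\sqrt{\eps}\,\|f\|_{L^2(\partial D)}$), and the scattered part by a clean energy estimate. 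Your route is more elementary since it avoids interpolation theory, and the energy argument makes the $\eps$-independence of the scattered bound completely transparent. Both proofs ultimately rest on the same non-trivial ingredient, namely the uniform-in-$\eps$ control of $\|\aver{u}\|_{L^2(D)}$ from Lemma~\ref{lem:A1}; your remark that the $\sigma_a\ge 0$ case follows from the $\sigma_a\equiv 0$ case by pointwise kernel domination matches what the paper asserts (but does not spell out) at the start of its proof of Lemma~\ref{lem:A1}.
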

\begin{proof}
	We estimate the solution  by solving the linear transport equation with no absorption,
	\begin{equation}\nonumber
	|u(x, v)| \le \int_{0}^{\tau_{-}(x, v)} \exp\left(-\frac{\sigma_s}{\eps} l\right) \frac{\sigma_s}{\eps}\left|\aver{u}(x - lv)\right| dl + \exp\left(-\frac{\sigma_s}{\eps} \tau_{-}(x, v)\right)\left| f(x - \tau_{-}(x, v) v)\right|
	\end{equation} 
	where $\tau_{-}(x, v) = \sup\{t\, |\, x - tv \in D \}$.  Denote line integral operator $T$ by $$Tg(x,v) := \int_{0}^{\tau_{-}(x, v)} \exp\left(-\frac{\sigma_s}{\eps} l\right) \frac{\sigma_s}{\eps}g(x - lv) dl,$$ then we have the following $L^{\infty}$ and $L^1$ estimates for $Tg$,
	\begin{equation}\nonumber
	\begin{aligned}
\left|Tg(x, v) \right| &\le \|g\|_{\infty} \int_0^{\tau_{-}(x, v)} \exp\left(-\frac{\sigma_s}{\eps} l\right) \frac{\sigma_s}{\eps} dl < \|g\|_{\infty},\\
\int_{D\times\Omega} |Tg(x, v) |dx dv &= \int_{D\times D} \frac{\exp(-\frac{\sigma_s}{\eps}|x-y|)}{|x-y|^{d-1}}\frac{\sigma_s}{\eps} |g(y)| dx dy < \|g\|_{L^1}.
\end{aligned}
\end{equation}
The latter inequality is from Young's inequality.
By Riesz-Thorin interpolation and Lemma \ref{lem:A1}, when $0<\eps<  1$, there exists a constant $c_1$ independent of $\eps$ that
\begin{equation}\label{eq:T}
\|T\aver{u}\|_{L^2(D\times\Omega)} < \|\aver{u}\|_{L^2(D)} \le c_1 \|f\|_{H^{3/2}(\partial D)}.
\end{equation}
It remains to estimate the boundary contribution term $S h(x, v) = \exp\left(-\frac{\sigma_s}{\eps} \tau_{-}(x, v)\right) h(x - \tau_{-}(x, v) v)$, we follow a similar approach
\begin{equation}
|Sh(x, v)| \le \|h\|_{\infty}
\end{equation}
and by Young's inequality, there is a constant $c_2$ independent of $\eps$ when $0 < \eps <  1$ such that
\begin{equation}
\begin{aligned}
\int_{D\times\Omega} Sh(x, v) dx dv &= \int_D \int_{\partial D} \frac{\exp(-\frac{\sigma_s}{\eps}|x-y|)}{|x-y|^{d-1}} \left|n_y\cdot \frac{x-y}{|x-y|}\right| h(y) dS(y) dx\\
&\le c_2 \|h\|_{L^1(\partial D)}.
\end{aligned} 
\end{equation}
Then $\|S f\|_{L^2(D\times\Omega)} \le \sqrt{c_2} \|f\|_{L^2(\partial D)}$ by Riesz-Thorin interpolation. 
\end{proof}
\bibliographystyle{siam}
\bibliography{main}

\end{document}